\documentclass[review]{elsarticle}

\usepackage{lineno,hyperref}
\usepackage{graphicx}
\usepackage{enumerate}
\usepackage[small,bf,hang]{caption} 
\usepackage{subcaption}
\usepackage{amsmath}
\usepackage{amssymb}
\usepackage{rotating}
\usepackage{xcolor}		 	
\usepackage{verbatim}		
\usepackage{colortbl}
\usepackage{booktabs}
\usepackage{dcolumn}
\usepackage{expdlist}  
\usepackage{float}
\usepackage{url}
\usepackage{fullpage}
\usepackage{mathtools}
\usepackage{amsthm}
\usepackage{cleveref}[2012/02/15]
\usepackage{pifont}

\usepackage{longtable}
\usepackage{threeparttable}
\usepackage{threeparttablex}
\usepackage[ruled,vlined]{algorithm2e}
\usepackage{optidef}

\usepackage{mathptmx}   
\usepackage{pdfpages}
\usepackage{adjustbox}
\usepackage{array}
\usepackage[graphicx]{realboxes}
\usepackage{makecell}
\usepackage[onehalfspacing]{setspace}
\setstretch{1.1}

\usepackage{cclicenses}

\newcolumntype{R}[2]{%
    >{\adjustbox{angle=#1,lap=\width-(#2)}\bgroup}%
    l%
    <{\egroup}%
}
   
\DeclarePairedDelimiterX\setc[2]{\{}{\}}{\,#1 \;\delimsize\vert\; #2\,}

\DeclareMathOperator*{\argmin}{arg\,min}

\newtheorem{lemma}{Lemma}
\newtheorem{definition}{Definition}
\newtheorem{theorem}{Theorem}
\newtheorem{corollary}{Corollary}

\newcommand{\ssum}[1]{\sum_{\substack{#1}}}
\newcommand{\smax}[1]{\max_{\substack{#1}}}
\newcommand{\smin}[1]{\min_{\substack{#1}}}

\newcommand{\cmark}{\ding{51}}%
\newcommand{\xmark}{\ding{55}}%

\SetKwInput{KwData}{Input}
\SetKwInput{KwResult}{Output}

\SetKwFunction{Range}{range}

\SetKwProg{FnA}{Function}{}{end}\SetKwFunction{FRecursA}{calculateNumberIMSWeight}%
\SetKwProg{FnB}{Function}{}{end}\SetKwFunction{FRecursB}{updateNumberSubsetsWeight}%
\SetKwProg{FnC}{Function}{}{end}\SetKwFunction{FRecursC}{estimateHCN}%
\SetAlgoLongEnd

\SetKwFunction{Range}{range}
\SetKwFor{While}{while}{:}{fintq}%
\SetKw{KwTo}{to}\SetKwFor{For}{for}{\string:}{}%
\newcommand{\forcondA}{$i=1$  \KwTo{$n$}}
\newcommand{\forcondB}{$k=c+1-w_i$  \KwTo{$c$}}
\newcommand{\forcondC}{$k=c$  \KwTo{$w[i-1]$}}
\crefformat{footnote}{#2\footnotemark[#1]#3}

\journal{XXX}







\bibliographystyle{model5-names}\biboptions{authoryear}



\begin{document}

\begin{frontmatter}

\title{Features for the 0-1 knapsack problem based on inclusionwise maximal solutions}


\author[KULAK]{Jorik Jooken\corref{mycorrespondingauthor}}
\ead{jorik.jooken@kuleuven.be}
\cortext[mycorrespondingauthor]{Corresponding author}
\author[KULAK,UGENT,FLANDERSMAKE,UANTWERPEN]{Pieter Leyman}
\ead{pieter.leyman@ugent.be;pieter.leyman@kuleuven.be;pieter.leyman@uantwerpen.be}
\author[KULAK]{Patrick De Causmaecker}
\ead{patrick.decausmaecker@kuleuven.be}
\address[KULAK]{Department of Computer Science, KU Leuven Kulak, Etienne Sabbelaan 53, 8500 Kortrijk, Belgium}
\address[UGENT]{Department of Industrial Systems Engineering and Production Design, Technologiepark Zwijnaarde 46, 9052 Zwijnaarde, Belgium}
\address[FLANDERSMAKE]{Industrial Systems Engineering, Flanders Make@UGent, Sint-Martens-Latemlaan 2B, 8500 Kortrijk, Belgium}
\address[UANTWERPEN]{Department of Engineering Management, University of Antwerp, Prinsstraat 13, 2000 Antwerp, Belgium}




\begin{abstract}
Decades of research on the 0-1 knapsack problem led to very efficient algorithms that are able to quickly solve large problem instances to optimality. This prompted researchers to also investigate whether relatively small problem instances exist that are hard for existing solvers and investigate which features characterize their hardness. Previously the authors proposed a new class of hard 0-1 knapsack problem instances and demonstrated that the properties of so-called inclusionwise maximal solutions (IMSs) can be important hardness indicators for this class. In the current paper, we formulate several new computationally challenging problems related to the IMSs of arbitrary 0-1 knapsack problem instances. Based on generalizations of previous work and new structural results about IMSs, we formulate polynomial and pseudopolynomial time algorithms for solving these problems. From this we derive a set of 14 computationally expensive features, which we calculate for two large datasets on a supercomputer in approximately 540 CPU-hours. We show that the proposed features contain important information related to the empirical hardness of a problem instance that was missing in earlier features from the literature by training machine learning models that can accurately predict the empirical hardness of a wide variety of 0-1 knapsack problem instances. Using the instance space analysis methodology, we also show that hard 0-1 knapsack problem instances are clustered together around a relatively dense region of the instance space and several features behave differently in the easy and hard parts of the instance space.
\end{abstract}

\begin{keyword}
combinatorial optimization\sep 0-1 knapsack problem \sep packing \sep problem instance hardness \sep instance space analysis.
\end{keyword}

\end{frontmatter}


	\section{Introduction}
\label{introductionSection}
The 0-1 knapsack problem is a fundamental NP-hard optimization problem \citep{Kellerer:2004a} in which we are given a knapsack with capacity $c$ and $n$ items with weights $w_1, w_2, \ldots, w_n$ and profits $p_1, p_2, \ldots, p_n$. The goal is to select a subset of these $n$ items in such a way that the total weight of this subset does not exceed $c$ and the total profit of this subset is maximized. More formally, the 0-1 knapsack problem can be described by the following integer program:

\begin{maxi!}|l|
	{}{& & \ \ \sum_{i=1}^{n}p_i x_i}{}{} \label{objectiveFunction}
	\addConstraint{& \ \ \sum_{i=1}^{n}w_i x_i}{\leq c}{} \label{capacityConstraint}
	\addConstraint{x_{i}}{\in \{0,1\},}{\ \ \forall i \in \{1,2,\ldots,n\}} \label{domainConstraint}
\end{maxi!}
Here, the $n$ binary variables $x_1, x_2, \ldots, x_n$ are the decision variables that determine whether an item should be put into the knapsack. The problem parameters $c$, $n$, $w_1, w_2, \ldots, w_n$ and $p_1, p_2, \ldots, p_n$ are given as part of the input. For the rest of this paper assume that all problem parameters are strictly positive integers, that the weights are ordered in non-increasing order (i.e. $w_1 \geq w_2 \geq \ldots \geq w_n$), that not all items can be put into the knapsack at the same time (i.e. $\sum_{i=1}^{n}w_i > c$) and that any item can be put into the knapsack individually (i.e. $w_i \leq c, \forall i: 1 \leq i \leq n$).

The 0-1 knapsack problem is an important optimization problem that is closely related to several other important optimization problems. For example, one can solve problem instances of the bounded knapsack problem, the unbounded knapsack problem and binary integer programming by solving an equivalent problem instance of the 0-1 knapsack problem \citep{Pisinger:2000,Andonov:2000,Kellerer:2004a}. The 0-1 knapsack problem also occurs as a column generation subproblem of the cutting stock problem \citep{Vanderbeck:1999} and it is a special case of a wide variety of problems (e.g. the travelling thief problem \citep{Bonyadi:2013}, the knapsack problem with conflicts \citep{Pferschy:2009,Coniglio:2021} and the multidimensional knapsack problem \citep{Freville:2004}).

Given the importance of the 0-1 knapsack problem, researchers have investigated this problem for several decades. This led to several very efficient algorithms, some of which are able to solve most large problem instances from the literature to optimality in a few seconds (e.g. the algorithms \textit{\textbf{MT1}} \citep{Martello:1977}, \textit{\textbf{MT2}} \citep{Martello:1988}, \textit{\textbf{Expknap}} \citep{Pisinger:1995}, \textit{\textbf{Minknap}} \citep{Pisinger:1997} and \textit{\textbf{Combo}} \citep{Martello:1999}). The practical success of these algorithms also led researchers to investigate several fundamental questions concerning the hardness of 0-1 knapsack problem instances. For instance, researchers are interested in whether hard 0-1 knapsack problem instances exist, how such instances could be generated and which features characterize their hardness. An answer to these questions would lead to deeper insights about the strengths and weaknesses of the algorithms and it could lead to new knowledge that can in turn be used to create more powerful algorithms. However, these questions are also challenging for several reasons and, despite decades of research, multiple questions concerning problem instance hardness remain unanswered. For instance, only a few features (e.g. the number of items or the knapsack capacity) are known to influence the hardness of a problem instance and it is not entirely clear which other features are important hardness indicators. Such features could in principle be the result of arbitrary functions that map a problem instance to a real number, but researchers are mostly interested in finding features that contain important knowledge related to the hardness of a problem instance. Another difficulty is that the concept of problem instance hardness would ideally be studied independent of existing implementations of algorithms such that one could draw general conclusions for all possible algorithms. However, in this setting it is very hard to make powerful statements without sacrificing generality. A practical solution that researchers have used to cope with this is to instead use so-called empirical hardness models \citep{Hutter:2014} in which one uses the running time of a state-of-the-art algorithm as a proxy for problem instance hardness. This approach was also followed in Pisinger's seminal paper entitled ``Where are the hard knapsack problems?" \citep{Pisinger:2005} and in two more recent papers \citep{Smith-Miles:2021,Jooken:2022} that revisit this question from a new angle. In the current paper, we also follow this approach.

In previous work \citep{Jooken:2022}, the authors explained that the properties of \textit{inclusionwise maximal solutions} (further abbreviated as IMSs) can have an influence on the hardness of a problem instance. \cite{Jooken:2022} define an IMS as follows:

\begin{definition}
Consider a problem instance consisting of a knapsack with capacity $c$ and $n$ items with profits $p_{1}, p_{2}, \ldots, p_{n}$ and weights $w_{1}, w_{2}, \ldots, w_{n}$. A solution $x_{1}, x_{2}, \ldots, x_{n}$ is called inclusionwise maximal (relative to the given problem instance) if it is a feasible solution (i.e. $\sum_{i=1}^{n}w_{i} x_{i} \leq c$) and there does not exist an item $j$ that can be added to the knapsack without violating the capacity constraint (i.e. $\nexists j: x_{j}=0 \land w_{j}+\sum_{i=1}^{n}w_{i} x_{i} \leq c$).
\end{definition}

In particular, \cite{Jooken:2022} showed the existence of a class of problem instances (so-called noisy multi-group exponential (NMGE) problem instances) for which the objective function value of any IMS is nearly equal to the optimal one. Because of this, such solutions are hard to prune from the search space using upper and lower bounds and this can make NMGE problem instances hard to solve to optimality.

The theoretical results obtained by \cite{Jooken:2022} help us to obtain deeper insights into the structure of the search space associated with NMGE problem instances. However, the earlier work also leaves two important gaps that we aim to address in the current paper. Firstly, the obtained results are all related to the objective function value of an IMS, but several other important features such as the number of IMSs or the distribution of the total weight of IMSs have not been studied before. One of the most important reasons for this is that calculating these features is hard (sometimes even harder than solving the problem instance to optimality) and previously it was not clear how to calculate such features with an algorithm that is not exponential in terms of $n$. In the current paper, however, we show that these features can be calculated in polynomial or pseudopolynomial time. Secondly, the results obtained by \cite{Jooken:2022} use the specific structure of NMGE problem instances and the results are only applicable to these problem instances. We aim to generalize these results to arbitrary 0-1 knapsack problem instances.

The main contributions of the current paper are as follows:
\begin{enumerate}
\item We formulate several new computationally challenging problems related to the IMSs of a problem instance and we prove structural results of IMSs that allow us to solve these problems in polynomial or pseudopolynomial time. We derive several features from the solutions of these problems and from subexpressions that play a crucial role in the results that we prove.
\item We generalize several theoretical results obtained by \cite{Jooken:2022} to arbitrary 0-1 knapsack problem instances and show how features can be derived from these results.
\item All algorithms to calculate the features were implemented. We calculated these features, as well as a set of existing features, for two large sets of problem instances from the literature in a large-scale experiment using a supercomputer for approximately 540 CPU-hours. We make these features publicly available at \url{https://github.com/JorikJooken/knapsackFeatures} to help other researchers who want to build upon this work without having to face this computational challenge again.
\item We empirically show that the features from the current paper are important hardness indicators by training several machine learning models to predict the running time of the state-of-the-art knapsack algorithm \textit{\textbf{Combo}} and we show that the most accurate predictions are obtained when our features are used as input for the models. We also show that these features contain important information that was not present in earlier features from the literature. We determine important subsets of features and perform an instance space analysis using this subset. This allows us to visualize the features in a two-dimensional space and connect feature patterns in the instance space with problem instance hardness.
\end{enumerate}

The rest of this paper is structured as follows: in Section \ref{relatedWork} we give an overview of related work that is important to understand the broader context and put the current paper into perspective. In Section \ref{featureSection} we prove several results related to the IMSs of a problem instance, we derive features from these results and show how these features can be computed in polynomial or pseudopolynomial time. In Section \ref{experimentsSection} we discuss several experiments that we conducted based on these features and show that these features are important hardness indicators. Finally, we state the most important conclusions and discuss opportunities for further work in Section \ref{conclusionsSection}.

	\section{Related work}
\label{relatedWork}
In this section we give an overview of important related work from the literature that is useful to provide a context for the results presented in the current paper. The literature on the 0-1 knapsack problem is very rich and we do not attempt to give an exhaustive overview, but rather focus on the work that is relevant for the current paper. For more complete overviews, the interested reader is referred to the excellent overviews in \cite{Martello:1990}, \cite{Pisinger:1998}, \cite{Kellerer:2004a} and \cite{Cacchiani:2022}.

Most successful algorithms for the 0-1 knapsack problem are based on the principles of branch-and-bound, dynamic programming or hybrid versions in which unpromising dynamic programming states are fathomed based on bounds. A series of gradual improvements led to the development of the algorithms \textit{\textbf{MT1}} \citep{Martello:1977}, \textit{\textbf{MT2}} \citep{Martello:1988}, \textit{\textbf{Expknap}} \citep{Pisinger:1995}, \textit{\textbf{Minknap}} \citep{Pisinger:1997} and \textit{\textbf{Combo}} \citep{Martello:1999}. The most important of these algorithms is without a doubt the \textit{\textbf{Combo}} algorithm. It is an exact algorithm that is able to solve most problem instances from the literature in a couple of seconds and it still represents the currently best known algorithm, despite being published more than twenty years ago. It combines several algorithmic ideas and is reasonably similar to the \textit{\textbf{Minknap}} algorithm, but it introduces new techniques when the number of dynamic programming states exceeds a certain threshold. An important idea that \textit{\textbf{Combo}} uses, consists of adding valid inequalities (cardinality constraints) to the integer programming formulation which are then surrogate relaxed to obtain tighter dual bounds. Previous research \citep{Martello:1999,Pisinger:2005,Smith-Miles:2021,Jooken:2022} has shown that \textit{\textbf{Combo}} is almost always the fastest algorithm amongst these five algorithms. In the current paper we use an empirical hardness model in which the running time of \textit{\textbf{Combo}} is used as a proxy for problem instance hardness.

Since \textit{\textbf{Combo}} is able to solve most large problem instances from the literature in a couple of seconds and the knapsack problem is (weakly) NP-hard, researchers have also shown interest in finding problem instances that are able to pose a bigger challenge for knapsack algorithms. This research is challenging, since most researchers expect that such hard problem instances should indeed exist, but it is unclear how these problem instances can be found and there is only a limited insight into which features play an important role in hard knapsack problem instances. The research concerned with 0-1 knapsack problem instances can broadly be categorized into two groups of papers. 

A first group of papers studies knapsack problem instances for which the coefficients can be very large (e.g. exponentially large in terms of the number of items $n$). The hardness of these problem instances is not practically tested by using existing algorithm implementations like \textit{\textbf{Combo}} (most implementations only support 32-bit or 64-bit integers), but rather tested against hypothetical sets of algorithms (e.g. using different assumptions about the strengths of hypothetical bounds that these algorithms can use). Amongst this first group of papers, we mention \cite{Chvatal:1980}, \cite{Gu:1999} and \cite{Jukna:2011} which describe hard problem instances with very large coefficients for different sets of hypothetical algorithms. 

A second group of papers considers more practical problem instances for which the coefficients can be handled by existing algorithm implementations and the running time is investigated. For this second group of papers, we mention Pisinger's seminal paper entitled ``Where are the hard knapsack problems?" \citep{Pisinger:2005} in which 13 different classes of problem instances are studied and the running time of \textit{\textbf{Combo}} is investigated. In a more recent paper by the authors \citep{Jooken:2022}, a new class of hard problem instances was proposed and several theorems were proven which suggest that the properties of inclusionwise maximal solutions can be important hardness indicators for this new class of problem instances. These problem instances take much longer to solve than previous problem instances from the literature, despite being smaller, and \textit{\textbf{Combo}} was the only algorithm for which it was shown that many of these problem instances can be solved in less than 2 hours. Another recent paper \citep{Smith-Miles:2021} revisits Pisinger's question (``Where are the hard knapsack problems?") using the Instance Space Analysis methodology \citep{Smith-Miles:2012,Smith-Miles:2014,Smith-Miles:2015}. This methodology studies problem instances based on a set of features that (usually) contain important information regarding the running time of one or more algorithms. The problem instances are regarded as points in a high-dimensional space and they are projected to a two-dimensional space such that certain desirable properties are met (e.g. the PILOT method of \cite{Munoz:2018} attempts to find a projection that results in a linear distribution of the features and the running times). This allows one to see both the features and the running times in the same space such that the relationship between feature patterns and problem instance hardness can be visually observed. Features are clearly of crucial importance to this methodology, which allows one to investigate a wide range of interesting questions such as  ``Where are the hard problem instances located?'', ``Which features correlate with problem instance hardness?'', ``How similar are two given problem instances?'' and ``Is a given dataset of problem instances varied enough to be a representative benchmark with regard to a given set of features?''.

Features (and the relationship with problem instance hardness) also play an important role in other related studies. For instance, for several combinatorial optimization problems a remarkable phenomenon occurs when the value of a specifically chosen feature enters or exits a certain interval. Around this interval, the difficulty of a problem instance (based on an empirical hardness model) drastically changes from very easy to very hard. These are the so-called phase transitions in combinatorial optimization problems \citep{Hartmann:2006} and it has been conjectured that such a crucial feature and critical region for that feature exist for most problems \citep{Cheeseman:1991,Achlioptas:2005}. Clear examples of phase transitions have been found for the SAT problem \citep{Mitchell:1992} and the travelling salesman problem \citep{Gent:1996,Smith-Miles:2010}. Features also play an important role in the context of algorithm selection \citep{Rice:1976,Hall:2007,Kerschke:2019}. In algorithm selection, the goal is to select an algorithm from a (large) set of algorithms that can solve a set of problem instances as fast (or as well) as possible without having to run all algorithms from the given set. Instead, an algorithm selector is trained by using a machine learning model which learns to map problem instance features to a suitable algorithm. In the context of algorithm selection, it is also important how the cost of computing a certain feature compares with the cost of running the given algorithms (unlike the previous examples, where the informativeness of the features is the most important aspect). Since most features from the current paper are computationally expensive, they are more suitable for the previous examples where informativeness is more important rather than algorithm selection where the computational cost is also important. However, note that in some contexts such expensive features can still be useful for algorithm selection, even when their computational cost exceeds the cost of solving a problem instance. For example, \cite{Hutter:2014} describe model-based algorithm configuration \citep{Hutter:2011} and complex empirical analyses based on performance predictions \citep{Hutter:2010,Hutter:2013} as example applications for such expensive features.

	\section{Features related to inclusionwise maximal solutions}
\label{featureSection}
In the rest of this section, we consider a fixed 0-1 knapsack problem instance with a knapsack capacity of $c$ and $n$ items with weights $w_1, w_2, \ldots, w_n$ and profits $p_1, p_2, \ldots, p_n$. For the readers' convenience, we summarize all features that will be defined in the rest of this section in Table \ref{featuresOverviewTable}.

\begin{table}[h!] \scriptsize\centering 
	\begin{threeparttable}
		\caption{An overview of the features from the current paper.}
		\label{featuresOverviewTable} 
		\begin{tabular}{cccccccccccccccccccccc} \\
			\hline
			\noalign{\smallskip} 
			Feature & Domain & Obtained from & Algorithm implemented & Time class\\ 
			\noalign{\smallskip}
			\hline
			\noalign{\smallskip}
			\multicolumn{1}{l}{1. $\left|X\right|$} & $[1,2^n]$ & Subsection \ref{countingBasedFeatures}+\ref{dynamicProgramming} & Dynamic programming: $O(nc)$ & Pseudopolynomial\\ 
			\multicolumn{1}{l}{2. $\smin{\mathbf{x} \in X} \text{totalWeight}(\mathbf{x})$} & $[1,c]$ & Subsection \ref{countingBasedFeatures}+\ref{dynamicProgramming} & Dynamic programming: $O(nc)$ & Pseudopolynomial\\ 
			\multicolumn{1}{l}{3. $\smax{\mathbf{x} \in X} \text{totalWeight}(\mathbf{x})$} & $[1,c]$ & Subsection \ref{countingBasedFeatures}+\ref{dynamicProgramming} & Dynamic programming: $O(nc)$ & Pseudopolynomial\\ 
			\multicolumn{1}{l}{4. $\overline{\text{totalWeight}}$} & $[1,c]$ & Subsection \ref{countingBasedFeatures}+\ref{dynamicProgramming} & Dynamic programming: $O(nc)$ & Pseudopolynomial\\ 
			\multicolumn{1}{l}{5. $\sigma^2$} & $[0,c^2]$  & Subsection \ref{countingBasedFeatures}+\ref{dynamicProgramming} & Dynamic programming: $O(nc)$ & Pseudopolynomial\\ 	
			\multicolumn{1}{l}{6. $t_1$} & $]0,\infty[$  & Running time (in seconds) of the algorithm & Dynamic programming: $O(nc)$ & Pseudopolynomial\\ 	
			\multicolumn{1}{l}{7. $b$} & $[1,n]$ & Theorem \ref{lowerBoundTheorem} & For-loop: $O(n)$ & Polynomial\\
			\multicolumn{1}{l}{8. $f$} & $[1,n]$ & Theorem \ref{lowerBoundTheorem} & Sorting: $O(n \log(n))$ & Polynomial\\ 	 	
			\multicolumn{1}{l}{9. $\sum_{i=1}^{f-1}(p_{\pi_i})+\frac{c-w_b+1-\sum_{i=1}^{f-1}w_{\pi_i}}{w_{\pi_f}} p_{\pi_f}$} & $[0, \sum_{i=1}^{n} p_i]$ & Theorem \ref{lowerBoundTheorem} & Sorting: $O(n \log(n))$ &Polynomial\\ 	 
			\multicolumn{1}{l}{10. $g^*$} & $[2,n-1]$ & Subsection \ref{boundsFeatures} & \cite{Gronlund:2017}: $O(ng^*)$ & Polynomial\\
			\multicolumn{1}{l}{11. $s_{g^*}$} & $[1,n+1-g^*]$ & Theorem \ref{cardinalityTheorem} & \cite{Gronlund:2017}: $O(ng^*)$ & Polynomial\\
			\multicolumn{1}{l}{12. $t_2$} & $]0,\infty[$  & Running time (in seconds) of the algorithm & \cite{Gronlund:2017}: $O(ng^*)$ & Polynomial\\ 	  
			\multicolumn{1}{l}{13. $z$} & $[0,s_{g^*}]$ & Theorem \ref{cardinalityTheorem} & \cite{Pisinger:2000}: $O(g^*c\log(c))$ & Pseudopolynomial\\
			\multicolumn{1}{l}{14. $t_3$} & $]0,\infty[$  & Running time (in seconds) of the algorithm & \cite{Pisinger:2000}: $O(g^*c\log(c))$ & Pseudopolynomial\\ 	  	
			\noalign{\smallskip}
			\hline
		\end{tabular}
	\end{threeparttable}
\end{table}

\subsection{Counting-based features}
\label{countingBasedFeatures}
Let $X$ be the set of all IMSs for a given problem instance, let $\mathbf{x}=x_{1}, x_{2}, \ldots, x_{n}$ be an IMS from the set $X$ and let $\text{totalWeight}(\mathbf{x})=\ssum{i=1}^{n} w_{i} x_{i}$. We are interested in calculating the following features related to the IMSs:
$$\left|X\right|$$
$$\smin{\mathbf{x} \in X} \text{totalWeight}(\mathbf{x})$$
$$\smax{\mathbf{x} \in X} \text{totalWeight}(\mathbf{x})$$
$$\overline{\text{totalWeight}}=\frac{\ssum{\mathbf{x} \in X} \text{totalWeight}(\mathbf{x})}{\left|X\right|}$$
$$\sigma^2=\frac{\ssum{\mathbf{x} \in X} (\text{totalWeight}(\mathbf{x})-\overline{\text{totalWeight}})^2}{\left|X\right|}$$
These features are all related to the number of IMSs and the distribution of their total weights (or equivalently their total unused capacity, which is equal to $c$ minus the total weight). Note that for typical problem instances the number of IMSs can be extremely large (e.g. for many problem instances from the experiments section we have $\left|X\right|>10^{100}$), which makes computing the above features very challenging. 

We now define $\text{numberIMSWeight}(k)$ ($\forall k: 1 \leq k \leq c$) as follows:
$$\text{numberIMSWeight}(k)=\left| \setc{\mathbf{x} \in X}{\text{totalWeight}(\mathbf{x})=k} \right|$$
Using this definition, all features can be rewritten in terms of $\text{numberIMSWeight}(k)$:
$$\left|X\right| = \sum_{k=1}^c \text{numberIMSWeight}(k)$$
$$\smin{\mathbf{x} \in X} \text{totalWeight}(\mathbf{x}) = \smin{1 \leq k \leq c} \setc{k}{\text{numberIMSWeight}(k)>0}$$
$$\smax{\mathbf{x} \in X} \text{totalWeight}(\mathbf{x})= \smax{1 \leq k \leq c} \setc{k}{\text{numberIMSWeight}(k)>0}$$
$$\overline{\text{totalWeight}}=\frac{\ssum{\mathbf{x} \in X} \text{totalWeight}(\mathbf{x})}{\left|X\right|}=\frac{\sum_{k=1}^c \text{numberIMSWeight}(k) \cdot k}{\left|X\right|}$$
$$\sigma^2=\frac{\ssum{\mathbf{x} \in X} (\text{totalWeight}(\mathbf{x})-\overline{\text{totalWeight}})^2}{\left|X\right|}=\frac{\sum_{k=1}^c (k-\overline{\text{totalWeight}})^2  \cdot \text{numberIMSWeight}(k)}{\left|X\right|}$$
Hence, if we are given $\text{numberIMSWeight}(k)$ ($\forall k: 1 \leq k \leq c$) we can calculate all features in $O(c)$ time. In what follows, we will derive a dynamic programming algorithm based on a suitable way to partition $X$ that allows us to compute $\text{numberIMSWeight}(k)$ ($\forall k: 1 \leq k \leq c$) with a time complexity of $O(nc)$ and a space complexity of $O(c)$.

\subsection{Dynamic programming algorithm}
\label{dynamicProgramming}
The set of all IMSs $X$ for a given problem instance can be partitioned as follows:
$$X=\text{Exclude}_{1} \cup \text{Exclude}_{2} \cup \ldots \cup \text{Exclude}_{n}$$
Here, $\text{Exclude}_{i}$ represents the set of all IMSs for the given problem instance such that the item with index $i$ is the least heavy unselected item. Recall from Section \ref{introductionSection} that the items are sorted in non-increasing order of weight ($w_j \geq w_{j+1}$) and thus $\text{Exclude}_{i}$ can be written as follows:
$$\text{Exclude}_{i} = \setc{\mathbf{x} \in X}{x_{i}=0 \land \forall j \in \mathbb{N}: i+1 \leq j \leq n \Rightarrow x_{j}=1}$$
This leads to a useful characterization of IMSs shown in Lemma \ref{excludeLemma} and we will later be able to use this to compute $\text{numberIMSWeight}(k)$.

\begin{lemma}
\label{excludeLemma}
Let $S$ be the set of all $2^n$ solutions (either feasible or infeasible) for a given problem instance. Now the following three conditions are equivalent $\forall \mathbf{s} \in S, \forall i (1 \leq i \leq n)$:
\begin{gather*}
\mathbf{s} \in \text{Exclude}_{i}\\
\Leftrightarrow\\
c+1-w_{i} \leq \text{totalWeight}(\mathbf{s}) \leq c \land s_{i}=0 \land \forall j \in \mathbb{N}: i+1 \leq j \leq n \Rightarrow s_{j}=1\\
\Leftrightarrow\\
c+1-\ssum{j=i}^{n} w_{j} \leq \ssum{j=1}^{i-1} s_{j} w_{j} \leq c-\ssum{j=i+1}^{n} w_{j} \land s_{i}=0 \land \forall j \in \mathbb{N}: i+1 \leq j \leq n \Rightarrow s_{j}=1
\end{gather*}
\end{lemma}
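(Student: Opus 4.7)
I would prove the chain of equivalences by showing $\text{(1)} \Leftrightarrow \text{(2)}$ and $\text{(2)} \Leftrightarrow \text{(3)}$ separately, where (1), (2), (3) refer to the three conditions in the order they appear. The second equivalence is an algebraic rewrite, while the first equivalence is where the structural content (and the use of the non-increasing weight ordering) enters.

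\noindent\textbf{Equivalence $\text{(1)} \Leftrightarrow \text{(2)}$.} For the forward direction, assume $\mathbf{s} \in \text{Exclude}_i$. The constraints on $s_i$ and $s_{i+1}, \ldots, s_n$ are by definition of $\text{Exclude}_i$, and feasibility of $\mathbf{s}$ as an IMS gives $\text{totalWeight}(\mathbf{s}) \le c$. For the lower bound $c+1-w_i \le \text{totalWeight}(\mathbf{s})$, I use that $\mathbf{s}$ is an IMS and $s_i = 0$: item $i$ is unselected, so by the IMS condition adding it must violate capacity, i.e.\ $w_i + \text{totalWeight}(\mathbf{s}) > c$, which is the desired inequality. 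For the reverse direction, assume (2) holds. The only nontrivial thing to verify is the IMS property: we need to show that no item $j$ with $s_j = 0$ can be added without exceeding $c$. The unselected items all have index $j \le i$ (since $s_{i+1} = \ldots = s_n = 1$), and because the weights are sorted non-increasingly, $w_j \ge w_i$ for every such $j$. Then $w_j + \text{totalWeight}(\mathbf{s}) \ge w_i + (c+1-w_i) = c+1 > c$, so no such item can be added, and $\mathbf{s}$ is an IMS. Combined with the already-given conditions, this puts $\mathbf{s}$ in $\text{Exclude}_i$.

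\noindent\textbf{Equivalence $\text{(2)} \Leftrightarrow \text{(3)}$.} Under the conditions $s_i = 0$ and $s_j = 1$ for all $i+1 \le j \le n$ (which appear in both (2) and (3)), the total weight decomposes as
\[
\text{totalWeight}(\mathbf{s}) \;=\; \sum_{j=1}^{i-1} s_j w_j \;+\; 0 \;+\; \sum_{j=i+1}^{n} w_j.
\]
Substituting this identity into the inequalities $c+1-w_i \le \text{totalWeight}(\mathbf{s}) \le c$ and subtracting $\sum_{j=i+1}^{n} w_j$ throughout, together with the identity $w_i + \sum_{j=i+1}^n w_j = \sum_{j=i}^n w_j$, gives exactly the two-sided bound in (3), and the rewrite is reversible.

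\noindent\textbf{Main obstacle.} The genuine content sits in the reverse direction of $\text{(2)} \Rightarrow \text{(1)}$: the natural condition one would write down from $\text{Exclude}_i$ only explicitly forbids item $i$ from being added, but the IMS definition forbids \emph{any} unselected item from being addable. Bridging this gap is precisely where the sortedness assumption $w_1 \ge w_2 \ge \ldots \ge w_n$ becomes essential. The other pieces reduce to bookkeeping and algebraic rearrangement.
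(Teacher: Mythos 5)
Your proposal is correct and follows essentially the same route as the paper: the definition of $\text{Exclude}_{i}$ plus the IMS condition on item $i$ gives the two-sided weight bound, the sortedness $w_j \geq w_i$ for $j \leq i$ bridges from ``item $i$ cannot be added'' back to the full IMS property, and the second equivalence is the same algebraic substitution of $\text{totalWeight}(\mathbf{s}) = \sum_{j=1}^{i-1} s_j w_j + \sum_{j=i+1}^{n} w_j$. The only difference is organizational (you prove $(1)\Leftrightarrow(2)$ and $(2)\Leftrightarrow(3)$ separately, while the paper chains $(1)\Rightarrow(2)\Rightarrow(3)$ and back), which is immaterial.
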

\begin{proof}
We first show that the first condition implies the second condition, which in turn implies the third condition. We then show that the third condition implies the second condition, which in turn implies the first condition.

$( \Rightarrow )$: Since $\mathbf{s} \in \text{Exclude}_{i}$, we have $s_{i}=0 \land \forall j \in \mathbb{N}: i+1 \leq j \leq n \Rightarrow s_{j}=1$, because of the definition of $\text{Exclude}_{i}$. Furthermore, $\mathbf{s}$ is an IMS and thus it must be feasible. In other words, we have $\text{totalWeight}(\mathbf{s}) = \ssum{j=1}^{i-1} s_{j} w_{j} + 0 w_{i} +\ssum{j=i+1}^{n} w_{j} \leq c$ and thus $\ssum{j=1}^{i-1} s_{j} w_{j} \leq c-\ssum{j=i+1}^{n} w_{j}$. Finally, since $s_{i}=0$ and $\mathbf{s}$ is an IMS we have $\text{totalWeight}(\mathbf{s})+w_{i} > c$ and thus $c+1-w_{i} \leq \text{totalWeight}(\mathbf{s})$ and also $c+1-\ssum{j=i}^{n} w_{j} \leq \ssum{j=1}^{i} s_{j} w_{j}=\ssum{j=1}^{i-1} s_{j} w_{j}$.

$( \Leftarrow )$: Since $ \ssum{j=1}^{i-1} s_{j} w_{j} \leq c-\ssum{j=i+1}^{n} w_{j}$, we also have $\ssum{j=1}^{i-1} s_{j} w_{j} + 0 w_{i} +\ssum{j=i+1}^{n} w_{j} = \text{totalWeight}(\mathbf{s}) \leq c$ and thus $\mathbf{s}$ is a feasible solution. Since $ c+1-\ssum{j=i}^{n} w_{j} \leq \ssum{j=1}^{i-1} s_{j} w_{j}$, we also have $c+1-w_{i} \leq \text{totalWeight}(\mathbf{s})$ and thus item $i$ cannot be added into the knapsack without violating the capacity constraint. This in turn implies $\nexists j: s_{j}=0 \land w_{j}+\text{totalWeight}(\mathbf{s}) \leq c$, because the items are ordered in non-increasing order of weight. Hence, $\mathbf{s}$ is an IMS for which $s_{i}=0 \land \forall j \in \mathbb{N}: i+1 \leq j \leq n \Rightarrow s_{j}=1$ and we conclude that $\mathbf{s} \in \text{Exclude}_{i}$, as desired.
\end{proof}
Lemma \ref{excludeLemma} allows us to reduce the problem of computing $\text{numberIMSWeight}(k)$ to a more simple one. This is shown in Theorem \ref{equalityTheorem}.
\begin{theorem}
\label{equalityTheorem}
Define $\text{numberSubsetsWeight}(i,k)$ ($\forall i: 1 \leq i \leq n, \forall k \in \mathbb{Z}$) as follows:
$$\text{numberSubsetsWeight}(i,k)=\left| \setc{A \subseteq \{w_{1}, w_{2}, \ldots, w_{{i-1}}\}}{\ssum{w \in A} w=k} \right|$$
Now the following equality holds ($\forall k: 1 \leq k \leq c$):
\begin{align*}
	\text{numberIMSWeight}(k) & = \ssum{i=1}^{n} \begin{cases} 
             \text{numberSubsetsWeight}(i,k-\ssum{j=i+1}^{n} w_{j}) & \text{if } c+1-w_{i} \leq k \leq c\\
		 0 & \text{else}\\
               \end{cases}
\end{align*}
\end{theorem}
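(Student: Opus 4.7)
The plan is to prove the identity by counting $\text{numberIMSWeight}(k)$ through the partition $X = \text{Exclude}_{1} \cup \text{Exclude}_{2} \cup \ldots \cup \text{Exclude}_{n}$ introduced just before the theorem, and then identifying each piece with a subset-sum count that becomes $\text{numberSubsetsWeight}(i,\cdot)$.

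First I would verify that this really is a partition into pairwise disjoint sets. By the assumption $\sum_{i=1}^{n} w_i > c$, every IMS has at least one unselected item, so there is a well-defined least heavy unselected item; its index $i$ is unique because the ``all larger-indexed items are selected'' clause in the definition of $\text{Exclude}_{i}$ pins $i$ down uniquely. Hence $\text{numberIMSWeight}(k) = \sum_{i=1}^{n} \bigl|\{\mathbf{x} \in \text{Exclude}_{i} : \text{totalWeight}(\mathbf{x}) = k\}\bigr|$, and it suffices to show that the $i$th summand equals $\text{numberSubsetsWeight}(i,k-\sum_{j=i+1}^{n} w_j)$ when $c+1-w_i \leq k \leq c$ and equals $0$ otherwise.

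Next I would fix $i$ and apply Lemma \ref{excludeLemma}. The second equivalent form in that lemma says that $\mathbf{s} \in \text{Exclude}_{i}$ forces $s_i = 0$ and $s_j = 1$ for $j > i$; in particular, $\text{totalWeight}(\mathbf{s}) = \sum_{j=1}^{i-1} s_j w_j + \sum_{j=i+1}^{n} w_j$. Imposing $\text{totalWeight}(\mathbf{s}) = k$ is therefore equivalent to $\sum_{j=1}^{i-1} s_j w_j = k - \sum_{j=i+1}^{n} w_j$. The lemma also bakes in the range restriction $c+1-w_i \leq k \leq c$ on the total weight; if $k$ lies outside this range, no IMS in $\text{Exclude}_{i}$ can have that total weight, giving the ``else'' case of $0$.

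Finally, when $c+1-w_i \leq k \leq c$, the third equivalent form in Lemma \ref{excludeLemma} guarantees that \emph{every} choice of $s_1,\ldots,s_{i-1} \in \{0,1\}$ satisfying $\sum_{j=1}^{i-1} s_j w_j = k - \sum_{j=i+1}^{n} w_j$ automatically satisfies the two inequalities $c+1-\sum_{j=i}^{n} w_j \leq \sum_{j=1}^{i-1} s_j w_j \leq c-\sum_{j=i+1}^{n} w_j$ (these are just the range condition on $k$ rewritten), and therefore yields a genuine element of $\text{Exclude}_{i}$. Identifying a vector $(s_1,\ldots,s_{i-1})$ with the subset $A=\{w_j : s_j=1\} \subseteq \{w_1,\ldots,w_{i-1}\}$ gives a bijection between these IMSs and the subsets counted by $\text{numberSubsetsWeight}(i,k-\sum_{j=i+1}^{n} w_j)$, yielding the claimed equality. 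The only subtle point, and the one I would be most careful about, is checking that the range condition on $k$ is exactly what makes the subset-count unconditionally correct: outside the range the inequalities of Lemma \ref{excludeLemma} would exclude some subsets and over-counting would occur, which is precisely why the ``else'' branch must contribute $0$.
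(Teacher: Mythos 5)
Your proposal is correct and follows essentially the same route as the paper: decompose $X$ via the partition into the sets $\text{Exclude}_{i}$ and then invoke Lemma \ref{excludeLemma} to identify each summand with a subset-sum count. You simply spell out in more detail the disjointness of the partition and the bijection step that the paper compresses into ``Using Lemma \ref{excludeLemma}, we directly obtain,'' so there is no substantive difference.
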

\begin{proof}
We can rewrite $\text{numberIMSWeight}(k)$ as follows:
$$\text{numberIMSWeight}(k)=\left| \setc{\mathbf{x} \in X}{\text{totalWeight}(\mathbf{x})=k} \right|=\ssum{i=1}^{n} \left| \setc{\mathbf{x} \in \text{Exclude}_{i}}{\text{totalWeight}(\mathbf{x})=k} \right|$$
The first equality is due to the definition of $\text{numberIMSWeight}(k)$ and the second equality is due to the fact that $X$ can be partitioned such that
$$X=\text{Exclude}_{1} \cup \text{Exclude}_{2} \cup \ldots \cup \text{Exclude}_{n}$$
Using Lemma \ref{excludeLemma}, we directly obtain ($\forall i: 1 \leq i \leq n, \forall k \in \mathbb{Z}$):
\begin{align*}
    \left| \setc{\mathbf{x} \in \text{Exclude}_{i}}{\text{totalWeight}(\mathbf{x})=k} \right| & = \begin{cases} 
             \text{numberSubsetsWeight}(i,k-\ssum{j=i+1}^{n} w_{j}) & \text{if } c+1-w_{i} \leq k \leq c\\
		 0 & \text{else}\\
               \end{cases}
\end{align*}
and thus
\begin{align*}
	\text{numberIMSWeight}(k) & = \ssum{i=1}^{n} \begin{cases} 
              \text{numberSubsetsWeight}(i,k-\ssum{j=i+1}^{n} w_{j}) & \text{if } c+1-w_{i} \leq k \leq c\\
		 0 & \text{else}\\
               \end{cases}
\end{align*}
\end{proof}

The problem of calculating $\text{numberSubsetsWeight}(i,k)$ is (a counting version of) the famous subset sum problem \citep{Bellman:1966, Pisinger:1999, Cormen:2009,Bringmann:2017, Koiliaris:2019}. In this problem, we are given a multiset of integers and a target value $k$ and we are asked to determine whether a subset of this multiset exists such that the sum of its numbers is equal to $k$ (or count the number of subsets in our case). The counting version of this problem can be solved by slightly modifying Bellman's recursion for the decision version of the problem \citep{Bellman:1966}. More specifically, we have the following base cases  ($\forall i: 1 \leq i \leq n, \forall k < 0$):
$$\text{numberSubsetsWeight}(i,k)=0$$
and ($\forall k \geq 0$):
\begin{align*}
    \text{numberSubsetsWeight}(1,k) & = \begin{cases} 
             1 & \text{if } k=0\\
             0 & \text{else}\\
               \end{cases}
\end{align*}
and the following recursive case ($\forall i: 2 \leq i \leq n, \forall k \geq 0$):
$$\text{numberSubsetsWeight}(i,k) =\text{numberSubsetsWeight}(i-1,k-w_{{i-1}})+\text{numberSubsetsWeight}(i-1,k)$$
The recursive case is obtained by realizing that for every subset of $\{w_{1}, w_{2}, \ldots, w_{{i-1}}\}$ there are two options for the weight $w_{i-1}$ (it can either be or not be an element of that subset). Using this recursion, we can compute $\text{numberSubsetsWeight}(i,k_1)$ ($\forall k_1: 1 \leq k_1 \leq c$) using an algorithm with time and space complexity $O(c)$ if we are given $\text{numberSubsetsWeight}(i-1,k_2)$ ($\forall k_2: 1 \leq k_2 \leq c$).

The above insights lead to the following algorithm (pseudocode shown in Algorithm \ref{numberIMSWeightAlgorithm}) for computing $\text{numberIMSWeight}(k)$ ($\forall k: 1 \leq k \leq c$). We first initialize two arrays with $c+1$ zeros to store the result of $\text{numberSubsetsWeight}(i,k)$ for a fixed value of $i$ ($\forall k: 0 \leq k \leq c$) and the result of $\text{numberIMSWeight}(k)$ ($\forall k: 1 \leq k \leq c$). Then the variable $i$ is iterated through the values from 1 to $n$ and in every iteration we update the array that stores the result of $\text{numberSubsetsWeight}(i,k)$ by using the base case and recursive case from the previous paragraph and we also update the array that stores the result of $\text{numberIMSWeight}(k)$ ($\forall k: 1 \leq k \leq c$) by using Theorem \ref{equalityTheorem}. The updates in every iteration take $O(c)$ time and space, which leads to a time complexity of $O(nc)$ and a space complexity of $O(c)$ for the whole algorithm.

\begin{algorithm}[h!]
  \FnB{\FRecursA{$n$, $c$, $w$}}{
    \KwData{\\$n$: an integer that represents the number of items\\$c$: an integer that represents the knapsack capacity\\$w$: an array of $n+1$ integers, where $w[0]$ contains an unused dummy value and $w[1], w[2], \ldots, w[n]$ represent the weights of the items in non-increasing order}
    \KwResult{\\numberIMSWeight: an array of $c+1$ integers such that $\text{numberIMSWeight}[k]$ represents the number of IMSs with weight $k$ ($\forall k: 1 \leq k \leq c$)}
    \tcc{Start of code}
    {numberIMSWeight $\gets zeros(c+1)$}\;
    {numberSubsetsWeight $\gets zeros(c+1)$}\;
    {suffixSum $\gets \ssum{j=1}^{n} w[j]$}\;
   \For{\forcondA}{
	 \tcc{In iteration $i$, suffixSum stores $\ssum{j=i+1}^{n} w[j]$}
	{suffixSum $\gets \text{suffixSum}-w[i]$}\;
	\tcc{In iteration $i$, numberSubsetsWeight$[k]$ stores the function numberSubsetsWeight$(i,k)$}
	\tcc{Update numberSubsetsWeight: base case}
	 \uIf{$i=1$}{
		{numberSubsetsWeight$[0] \gets 1$}\;
    	}
	\tcc{Update numberSubsetsWeight: recursive case}
	\uElse{
		\SetKw{KwTo}{down to}\SetKwFor{For}{for}{\string:}{}%
		\For{\forcondC}{
		{{$\text{numberSubsetsWeight}[k] \gets \text{numberSubsetsWeight}[k-w[i-1]]+\text{numberSubsetsWeight}[k]$}\;}
    		}
	}
	\tcc{Update numberIMSWeight}
	\For{\forcondB}{
      		\uIf{$k<\text{suffixSum}$}{
      			{continue}\;
    		}
		\uElse{
			{$\text{numberIMSWeight}[k] \gets \text{numberIMSWeight}[k]+\text{numberSubsetWeight}[k-\text{suffixSum}]$}\;
		}
    	}
    }
    {\Return numberIMSWeight}\;
  }
  \caption{Calculates $\text{numberIMSWeight}(k)$ ($\forall k: 1 \leq k \leq c$) with a time complexity of $O(nc)$ and a space complexity of $O(c)$.}
  \label{numberIMSWeightAlgorithm}
\end{algorithm}

\subsection{Bounding-based features: generalizations of \cite{Jooken:2022}}
\label{boundsFeatures}
In \cite{Jooken:2022} we proved several inequalities for NMGE problem instances based on several structural characterizations of NMGE problem instances. These inequalities are defined in terms of the parameters of the generator for NMGE problem instances and are not applicable to other problem instances generated by other generators. In this subsection we generalize these inequalities for arbitrary 0-1 knapsack problem instances, regardless of how they were generated, by uncovering several structural results of arbitrary 0-1 knapsack problem instances.

We first prove a lower bound for the objective function value of any IMS in Theorem \ref{lowerBoundTheorem}.
\begin{theorem}
\label{lowerBoundTheorem}
Let $b = \max\{e \in \{1,\ldots,n\}|\sum_{i=e}^n w_i > c\}$, let $\pi_1, \pi_2, \ldots, \pi_n$ be a permutation of the indices $\{1, 2, \ldots, n\}$ obtained by sorting the items such that $\frac{p_{\pi_i}}{w_{\pi_i}} \leq \frac{p_{\pi_{i+1}}}{w_{\pi_{i+1}}}$ ($\forall i \in \{1, 2, \ldots, n-1\}$) and let $f = \min\{e \in \{1,\ldots,n\}|\sum_{i=1}^e w_{\pi_i} \geq c-w_b+1\}$. The objective function value of any IMS $\mathbf{x}=x_1, x_2, \ldots, x_n$ is bounded from below as follows:
$$\sum_{i=1}^n p_i x_i \geq \sum_{i=1}^{f-1}(p_{\pi_i})+\frac{c-w_b+1-\sum_{i=1}^{f-1}w_{\pi_i}}{w_{\pi_f}} p_{\pi_f}$$
\end{theorem}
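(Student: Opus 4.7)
The plan is to split the proof into two clean claims. First, show that any IMS $\mathbf{x}$ must satisfy $\text{totalWeight}(\mathbf{x}) \geq c - w_b + 1$. Second, show that for \emph{any} feasible 0-1 solution $\mathbf{y}$ satisfying $\sum_i w_i y_i \geq c - w_b + 1$, the profit $\sum_i p_i y_i$ is bounded below by the stated expression via an LP relaxation argument.

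For the first claim, I would apply Lemma \ref{excludeLemma}. Every IMS $\mathbf{x}$ lies in some $\text{Exclude}_i$, so items $i+1,\ldots,n$ are all selected and $\text{totalWeight}(\mathbf{x}) \geq \sum_{j=i+1}^{n} w_j$. Feasibility forces $\sum_{j=i+1}^{n} w_j \leq c$, so by the definition of $b$ as the largest index with $\sum_{j=b}^{n} w_j > c$, we must have $i \geq b$. Since weights are non-increasing, $w_i \leq w_b$, and Lemma \ref{excludeLemma} then gives $\text{totalWeight}(\mathbf{x}) \geq c+1-w_i \geq c+1-w_b$, as needed.

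For the second claim, consider the continuous relaxation
\begin{equation*}
\min \ \sum_{i=1}^n p_i y_i \quad \text{s.t.} \quad \sum_{i=1}^n w_i y_i \geq c - w_b + 1, \ \ 0 \leq y_i \leq 1.
\end{equation*}
A standard exchange argument (swap a small amount of a selected higher-ratio item for a not-fully-selected lower-ratio item; this preserves the weight constraint and weakly decreases the profit) shows that an optimal LP solution is obtained greedily by packing items in the permutation order $\pi$ (increasing $p_{\pi_i}/w_{\pi_i}$) until the constraint is met, possibly with a fractional last item. By the definition of $f$, this greedy solution takes $y_{\pi_i} = 1$ for $i < f$, a fractional $y_{\pi_f} = (c - w_b + 1 - \sum_{i=1}^{f-1} w_{\pi_i})/w_{\pi_f}$, and $y_{\pi_i} = 0$ for $i > f$, yielding LP value equal to the right-hand side of the claimed inequality. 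Since any IMS $\mathbf{x}$ is a 0-1 feasible point for this LP (by the first claim), its objective is at least the LP minimum.

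The main obstacle is really the second step: one must verify that the exchange argument applies here (minimizing profit subject to a lower bound on weight is the ``reverse'' of the classic fractional knapsack, so the optimal order is reversed — cheapest per unit weight first), and one must check the boundary cases. In particular, I should confirm that $f$ is well-defined, which follows from the instance assumption $\sum_{i=1}^n w_i > c \geq c - w_b + 1$ ensuring that the full sum $\sum_{i=1}^{n} w_{\pi_i}$ exceeds the target, so the minimum in the definition of $f$ is over a nonempty set. The rest is routine algebra.
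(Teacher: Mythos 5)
Your proposal is correct and follows essentially the same route as the paper's proof: first establish $\text{totalWeight}(\mathbf{x}) \geq c - w_b + 1$ for every IMS via the definition of $b$, then lower-bound the profit by the greedy optimum of the continuous relaxation with that weight constraint. The only cosmetic differences are that you invoke Lemma \ref{excludeLemma} for the first step where the paper argues directly that some $x_k=0$ with $k\geq b$, and that you drop the (redundant at the optimum) upper capacity constraint from the LP.
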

\begin{proof}
The smallest possible objective function value of any IMS is equal to the objective function value of the following integer program:
\begin{mini!}[2]
	{}{\sum_{i=1}^{n}p_i x_i}{}{} \label{objectiveFunction1}
	\addConstraint{\sum_{i=1}^{n}w_i x_i}{\leq c}{} \label{capacityConstraint1}
	\addConstraint{\sum_{i=1}^{n}(w_i x_i)+w_j (1-x_j)}{\geq (c+1)(1-x_j)}{\ \ \ \forall j \in \{1,2,\ldots,n\}} \label{IMSConstraints}
	\addConstraint{x_j \in \{0,1\}}{}{\ \ \ \forall j \in \{1,2,\ldots,n\}} \label{IntegerDomainConstraints}
\end{mini!}
Here, the decision variables are $x_1, x_2, \ldots, x_n$. The constraints (\ref{capacityConstraint1}), (\ref{IMSConstraints}) and (\ref{IntegerDomainConstraints}) impose the set of feasible solutions to be precisely the set of all IMSs. More specifically, constraint (\ref{capacityConstraint1}) represents the capacity constraint. Constraints (\ref{IMSConstraints}) impose the condition that no unselected item can be added to the knapsack without violating the capacity constraint. When $x_j=1$ the constraint is always satisfied and when $x_j=0$ it results in  $\sum_{i=1}^{n}(w_i x_i)+w_j \geq c+1$. Finally, constraints (\ref{IntegerDomainConstraints}) impose the decision variables to be binary.

We now derive a lower bound for the total weight of any IMS $\mathbf{x}=x_1, x_2, \ldots, x_n$ (i.e. any feasible solution for the previous optimization problem) . Since $\sum_{i=b}^n w_i > c$, there must be some $k \in \{b, b+1, \ldots, n\}$ for which $x_k=0$. This in turn implies that $\sum_{i=1}^{n} w_i x_i \geq c-w_b+1$, because $\mathbf{x}$ is an IMS and if $\sum_{i=1}^{n} w_i x_i \leq c-w_b$ would hold, we could set $x_k$ to 1 without exceeding the knapsack capacity.

The following linear optimization problem represents a relaxation of the previous one:
\begin{mini!}[2]
	{}{\sum_{i=1}^{n}p_i x_i}{}{} \label{objectiveFunction2}
	\addConstraint{\sum_{i=1}^{n}w_i x_i}{\leq c}{} \label{capacityConstraint2}
	\addConstraint{\sum_{i=1}^{n}w_i x_i}{\geq c-w_b+1}{} \label{lowerBoundConstraint}
	\addConstraint{0 \leq x_j \leq 1}{}{\ \ \ \forall j \in \{1,2,\ldots,n\}} \label{RealDomainConstraints}
\end{mini!}
In comparison with before, we dropped constraints (\ref{IMSConstraints}), we replaced the integrality constraints (\ref{IntegerDomainConstraints}) by a linear relaxation (constraints (\ref{RealDomainConstraints})) and we added the constraint (\ref{lowerBoundConstraint}) (which holds for any feasible solution of the original optimization problem) such that we indeed obtain a relaxation.

Since $p_i>0$ ($\forall i \in \{1,2,\ldots,n\}$) and we want to minimize $\sum_{i=1}^{n}p_i x_i=\sum_{i=1}^{n}\frac{p_i}{w_i}w_i x_i$, constraint (\ref{lowerBoundConstraint}) is tight for the optimal solution of the relaxation (i.e. we have $\sum_{i=1}^{n}w_i x_i=c-w_b+1$). This means that the optimal solution of the relaxation can be constructed by considering all items consecutively in non-decreasing order of $\frac{p_{\pi_i}}{w_{\pi_i}}$ and make $x_{\pi_i}$ as large as possible without violating constraints (\ref{RealDomainConstraints}) and stop when $\sum_{i=1}^{n}w_i x_i=c-w_b+1$. Hence, the optimal solution is given by setting $x_{\pi_j}=1$ ($\forall j \in \{1, 2, \ldots, f-1\}$), setting $x_{\pi_f}=\frac{c-w_b+1-\sum_{i=1}^{f-1}w_{\pi_i}}{w_{\pi_f}}$ and setting  $x_{\pi_j}=0$ ($\forall j \in \{f+1, f+2, ..., n\}$). The corresponding objective function value is given by $\sum_{i=1}^{f-1}(p_{\pi_i})+\frac{c-w_b+1-\sum_{i=1}^{f-1}w_{\pi_i}}{w_{\pi_f}} p_{\pi_f}$ such that for any IMS $\mathbf{x}$ we have:
$$\sum_{i=1}^n p_i x_i \geq \sum_{i=1}^{f-1}(p_{\pi_i})+\frac{c-w_b+1-\sum_{i=1}^{f-1}w_{\pi_i}}{w_{\pi_f}} p_{\pi_f}$$
\end{proof}

As a direct consequence of Theorem \ref{lowerBoundTheorem}, we obtain the first generalization of Theorem 1 from \cite{Jooken:2022} in Corollary \ref{ratioCorollary}, which relates the objective function value of an IMS to the optimal objective function value.
\begin{corollary}
\label{ratioCorollary}
Let $x_1, x_2, \ldots, x_n$ be an IMS, let $x_1^*, x_2^*, \ldots, x_n^*$ be an optimal solution for a given problem instance and let $\text{UB}$ be an upper bound for $\sum_{i=1}^n p_i x_i^*$. We have
$$ \frac{\sum_{i=1}^n p_i x_i}{\sum_{i=1}^n p_i x_{i}^*} \geq \frac{\sum_{i=1}^{f-1}(p_{\pi_i})+\frac{c-w_b+1-\sum_{i=1}^{f-1}w_{\pi_i}}{w_{\pi_f}} p_{\pi_f}}{\text{UB}}$$
\end{corollary}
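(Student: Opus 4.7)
The plan is to chain two bounds derived from the hypotheses. By Theorem \ref{lowerBoundTheorem} applied to the IMS $x_1, x_2, \ldots, x_n$, the numerator of the left-hand side satisfies
$$\sum_{i=1}^n p_i x_i \;\geq\; \sum_{i=1}^{f-1}(p_{\pi_i})+\frac{c-w_b+1-\sum_{i=1}^{f-1}w_{\pi_i}}{w_{\pi_f}} p_{\pi_f},$$
while the assumption that $\text{UB}$ is an upper bound for the optimum gives
$$\sum_{i=1}^n p_i x_i^{*} \;\leq\; \text{UB}.$$
I would then divide the first inequality by the second (in its reciprocal form) to obtain the stated ratio bound.

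To turn this chain into a rigorous argument, I need to justify that all quantities being manipulated are strictly positive so that the division preserves the direction of the inequality. For the denominator this is immediate: $\sum_{i=1}^n p_i x_i^{*} > 0$ (under the standing assumption that each item individually fits in the knapsack, placing any single item is feasible and has positive profit), and hence $\text{UB}>0$ as well. For the right-hand side numerator, I would invoke the definitions in Theorem \ref{lowerBoundTheorem}: since $f$ is the \emph{minimum} index with $\sum_{i=1}^f w_{\pi_i} \geq c-w_b+1$, minimality forces $\sum_{i=1}^{f-1} w_{\pi_i} < c-w_b+1$, so the fractional term $\frac{c-w_b+1-\sum_{i=1}^{f-1}w_{\pi_i}}{w_{\pi_f}} p_{\pi_f}$ is strictly positive; together with the non-negative prefix sum $\sum_{i=1}^{f-1} p_{\pi_i}$, the lower bound itself is positive.

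Once positivity is in hand, the final step is the elementary observation that if $A \geq L > 0$ and $0 < B \leq U$, then $A/B \geq L/U$. Applying this with $A=\sum_{i=1}^n p_i x_i$, $B=\sum_{i=1}^n p_i x_i^{*}$, $L$ equal to the lower bound from Theorem \ref{lowerBoundTheorem}, and $U=\text{UB}$ yields exactly the desired inequality. The main (and essentially only) obstacle here is the positivity verification above; everything else is a direct substitution, which matches the description of the corollary as a ``direct consequence'' of the theorem.
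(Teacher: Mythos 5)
Your proposal is correct and follows exactly the route the paper intends: the paper gives no separate proof, presenting the corollary as an immediate consequence of Theorem \ref{lowerBoundTheorem} combined with the definition of $\text{UB}$, which is precisely your chaining of the two bounds. Your additional positivity verification (using minimality of $f$ and the standing assumption $w_i \leq c$) is a harmless extra dose of rigor, not a different approach.
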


Note that it is necessary to have an upper bound for the optimal objective function value to be able to calculate the right hand side of Corollary \ref{ratioCorollary}. Such an upper bound was indeed proposed for NMGE problem instances in \cite{Jooken:2022}, but it has also been a topic of active research for arbitrary 0-1 knapsack problem instances and several bounds are available in the literature \citep{Dantzig:1957, Martello:1977, Martello:1988, Martello:1999, Kellerer:2004a}.

Next, we consider a situation where the items can be divided into several groups depending on their weight. In Theorem \ref{cardinalityTheorem}, which proves a cardinality constraint for IMSs on the number of selected items in the last group, we generalize Theorem 2 from \cite{Jooken:2022}.

\begin{theorem}
\label{cardinalityTheorem}
Let $P=\{\{1,\ldots,m_1\},\{m_1+1, \ldots, m_2\},\ldots,\{m_{g-1}+1, \ldots, m_g\}\}$ (with $m_1 < m_2 < \ldots < m_g=n$ and $g \geq 2$) be a partition of the set of all items $\{1,2,\ldots,n\}$ into $g$ groups (i.e. the items within one group have consecutive indices). Define $m_0=0$ (for notational convenience). Let $\mathbf{x}=x_1, x_2, \ldots, x_n$ be an IMS, let $s_i=m_i-m_{i-1}$ be the number of items in group $i$ ($\forall i: 1 \leq i \leq g$) and let $l_i=\sum_{j=m_{i-1}+1}^{m_i} x_j$ be the number of selected items in $\mathbf{x}$ that belong to group $i$ ($\forall i: 1 \leq i \leq g$). We have:
$$l_g \geq z$$
where
$$z = \smin{} \setc{a \in \{0, 1, \ldots, s_g\}}{\sum_{i=m_{g-1}+1}^{m_{g-1}+a} w_i >  c - \smax{0 \leq n_i \leq s_i, \text{integer}\\ \sum_{i=1}^{g-1} w_{m_i} n_i \le c} \Big(\sum_{i=1}^{g-1} w_{m_{i-1}+1} n_i\Big) - w_{m_{g-1}+1}}$$
or $z=s_g$ if the above set over which the minimum is taken is empty.
\end{theorem}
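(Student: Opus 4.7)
The plan is to argue by contradiction: assume $\mathbf{x}$ is an IMS with $l_g < z$ (including the edge case $z = s_g$ when the defining set is empty), and exhibit an unselected item in group $g$ that can still be added to the knapsack without violating the capacity constraint.

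First, I would bound the contribution of groups $1,\ldots,g-1$ to $\text{totalWeight}(\mathbf{x})$. Because the items are sorted in non-increasing order of weight and the groups are intervals of consecutive indices, every item in group $i$ has weight in the interval $[w_{m_i}, w_{m_{i-1}+1}]$. Hence the $l_i$ items selected from group $i$ contribute at least $l_i w_{m_i}$ and at most $l_i w_{m_{i-1}+1}$ to the total weight. Feasibility of $\mathbf{x}$ (dropping the non-negative group-$g$ contribution) forces $\sum_{i=1}^{g-1} l_i w_{m_i} \leq c$, so the tuple $(l_1,\ldots,l_{g-1})$ is a feasible choice for $(n_1,\ldots,n_{g-1})$ in the maximization defining $M$. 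The contribution of groups $1,\ldots,g-1$ to $\text{totalWeight}(\mathbf{x})$ is therefore at most $\sum_{i=1}^{g-1} w_{m_{i-1}+1} l_i \leq M$.

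Next, I would bound the contribution of group $g$. Since weights are non-increasing, any $l_g$ items selected from group $g$ have total weight at most $\sum_{i=m_{g-1}+1}^{m_{g-1}+l_g} w_i$, the sum of the $l_g$ heaviest items in that group. The hypothesis $l_g < z$ together with the minimality of $z$ yields $\sum_{i=m_{g-1}+1}^{m_{g-1}+l_g} w_i \leq c - M - w_{m_{g-1}+1}$; in the edge case where the defining set is empty and $z = s_g$, the same inequality holds for every $a \in \{0,1,\ldots,s_g\}$, and in particular for $a = l_g$. Adding the two bounds gives $\text{totalWeight}(\mathbf{x}) \leq M + (c - M - w_{m_{g-1}+1}) = c - w_{m_{g-1}+1}$.

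Finally, because $l_g < z \leq s_g$, at least one item $j$ in group $g$ is unselected in $\mathbf{x}$. Since $j \geq m_{g-1}+1$ and the weights are globally non-increasing, $w_j \leq w_{m_{g-1}+1}$, so $\text{totalWeight}(\mathbf{x}) + w_j \leq c$, meaning $j$ could be added without violating the capacity constraint. This contradicts $\mathbf{x}$ being an IMS and establishes $l_g \geq z$. I expect the main obstacle to be the simultaneous use of two different per-group weight bounds hidden inside $M$ (the light endpoint $w_{m_i}$ in its feasibility constraint, and the heavy endpoint $w_{m_{i-1}+1}$ in its objective), which must be lined up correctly so that the actual weight of any IMS restricted to groups $1,\ldots,g-1$ is indeed dominated by $M$; the empty-set edge case for $z$ is a small bookkeeping issue by comparison.
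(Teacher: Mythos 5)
Your proof is correct and is essentially the contrapositive of the paper's own argument: both bound the contribution of groups $1,\ldots,g-1$ by the bounded-knapsack maximum $M$ via the two-sided per-group weight estimates, bound the group-$g$ contribution by the prefix sum of its heaviest $l_g$ items, and invoke the IMS condition through $w_{m_{g-1}+1}$. The edge case $z=s_g$ and the alignment of the light endpoint $w_{m_i}$ in the feasibility constraint with the heavy endpoint $w_{m_{i-1}+1}$ in the objective are handled exactly as in the paper.
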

\begin{proof}
The items are ordered in non-increasing order of weight and hence we have the following inequalities:
\begin{gather*}
\sum_{i=1}^{g-1} w_{m_{i}} l_i \leq \sum_{i=1}^{m_{g-1}} w_i x_i  \leq \sum_{i=1}^{g-1} w_{m_{i-1}+1} l_i
\end{gather*}
Since $\mathbf{x}$ is feasible, the $l_i$ must be such that they are integers for which $0 \leq l_i \leq s_i$ and $\sum_{i=1}^{g-1} w_{m_{i}} l_i \leq \sum_{i=1}^{m_{g-1}} w_i x_i \leq  \sum_{i=1}^{n} w_i x_i \leq c$. This means that $\sum_{i=1}^{g-1} w_{m_{i-1}+1} l_i$ is bounded from above by the maximum possible value it could attain while the restrictions on $l_i$ must apply:
$$\sum_{i=1}^{g-1} w_{m_{i-1}+1} l_i \leq \smax{0 \leq n_i \leq s_i, \text{integer}\\ \sum_{i=1}^{g-1} w_{m_i} n_i \le c} \sum_{i=1}^{g-1} w_{m_{i-1}+1} n_i$$
Here, the $n_i$ represent the integer decision variables of the maximization problem on the right hand side. Because of this, we also have:
\begin{align*}
c - \sum_{i=1}^{n} w_i x_i = c - \sum_{i=1}^{m_{g-1}} w_i x_i - \sum_{i=m_{g-1}+1}^{n} w_i x_i \geq c - \smax{0 \leq n_i \leq s_i, \text{integer}\\ \sum_{i=1}^{g-1} w_{m_i} n_i \le c} \Big(\sum_{i=1}^{g-1} w_{m_{i-1}+1} n_i\Big) - \sum_{i=m_{g-1}+1}^{n} w_i x_i 
\end{align*}
In case $l_g < s_g$, we must have $w_{m_{g-1}+1} > c - \sum_{i=1}^{n} w_i x_i$ because otherwise at least one item from $\{m_{g-1}+1, \ldots, n\}$ could be added without violating the capacity constraint. Hence we find that either $l_g=s_g$ or:
\begin{gather*}
w_{m_{g-1}+1} > c - \smax{0 \leq n_i \leq s_i, \text{integer}\\ \sum_{i=1}^{g-1} w_{m_i} n_i \le c} \Big(\sum_{i=1}^{g-1} w_{m_{i-1}+1} n_i\Big) - \sum_{i=m_{g-1}+1}^{n} w_i x_i 
\end{gather*}
Rewriting this gives:
\begin{gather*}
\sum_{i=m_{g-1}+1}^{n} w_i x_i >  c - \smax{0 \leq n_i \leq s_i, \text{integer}\\ \sum_{i=1}^{g-1} w_{m_i} n_i \le c} \Big(\sum_{i=1}^{g-1} w_{m_{i-1}+1} n_i\Big) - w_{m_{g-1}+1}
\end{gather*}
Since the items are ordered in non-increasing order of weight, we conclude that $l_g \geq z$ because of the above inequality and the theorem follows.
\end{proof}

Theorem \ref{cardinalityTheorem} assumes that the items are partitioned into $g$ groups based on their weights. There are several possibilities to make these groups and Theorem \ref{cardinalityTheorem} holds for arbitrary groups. In the current paper, we choose to make groups such that items within one group tend to have similar weights. More specifically, we model the problem of finding $g$ groups that fit the data well as a clustering problem \citep{Hartigan:1979} on the weights of the items. In this problem, we are interested in finding $g$ real values $\mu_1, \mu_2, \ldots, \mu_g$ (the centroids) such that the sum of the squared distances between a weight and its closest centroid is minimized. This means that we want to compute the following function $h(g)$:
$$h(g)=\smin{\mu_1, \mu_2, \ldots, \mu_g \in \mathbb{R}}\Big(\sum_{i=1}^{n} \smin{j \in \{1, 2, \ldots, g\}} (w_i-\mu_j)^2\Big)$$
together with the centroids $\mu_1, \mu_2, \ldots, \mu_g$ that minimize this function and the $g$ groups that these centroids induce. Any set of $g$ real numbers $\mu_1, \mu_2, \ldots, \mu_g$ partitions the set of all items $\{1, \ldots, n\}$ into a partition $P=\{\{1,\ldots,m_1\},\{m_1+1, \ldots, m_2\},\ldots,\{m_{g-1}+1, \ldots, m_g\}\}$ (i.e. the $g$ groups) with equivalence relation $\sim$, where items $i$ and $j$ belong to the same group (i.e. $i \sim j$) if and only if they have the same closest centroid (ties are broken in favour of the leftmost centroid):
$$i \sim j \iff  \arg\min_{k \in \{1, 2, \ldots, g\}} (w_i-\mu_k)^2=\arg\min_{k \in \{1, 2, \ldots, g\}} (w_j-\mu_k)^2$$
The function $h(g)$ is non-increasing, with $h(n)=0$. We are also interested in finding an appropriate number of groups $g$ such that the data fit well within $g$ groups (i.e. $h(g)$ is small), but the groups are fairly large (for $g=n$, every group is a singleton set). For typical problem instances, $h(g)$ first decreases very fast and then slowly goes towards 0. This motivates the introduction of $g^*$, which is the smallest value after which $h(g)$ does not decrease a lot anymore. More formally, we define $g^*$ as follows:
$$g^* = \min \setc{g \in \{2,\ldots,n-1\}}{\frac{h(g+1)}{h(g)} > \alpha}$$
Here, $\alpha$ is a parameter that denotes a real value between 0 and 1 and a higher value of $\alpha$ results in a higher $g^*$. This is visually illustrated in Fig. \ref{gStarFigure}, which depicts what $h(g)$ typically looks like and also depicts $g^*$ using a value of $\alpha=0.9$.
 	\begin{figure}[h]
	\centering
  \includegraphics[width=0.8\linewidth]{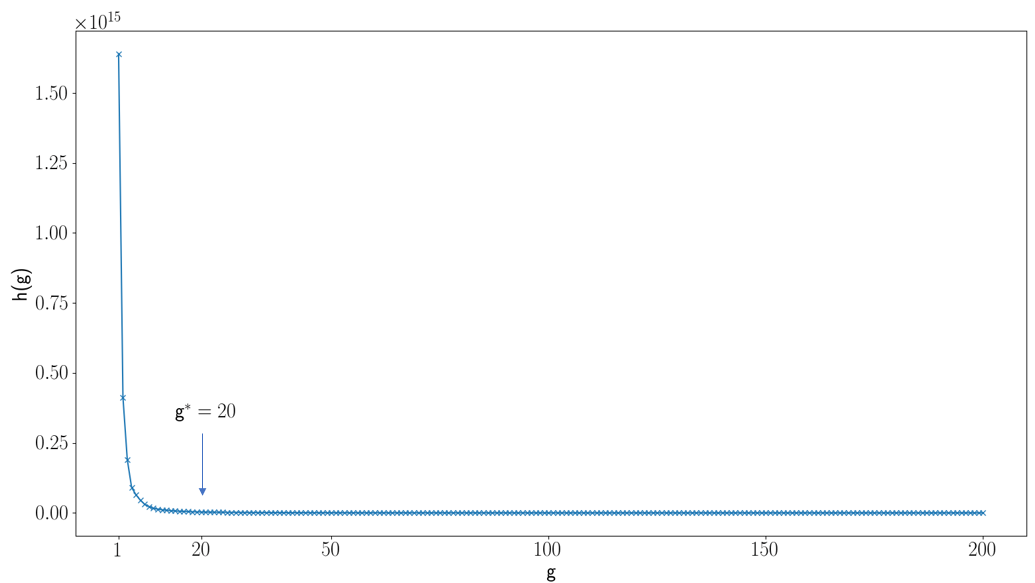}
\caption{A plot of $h(g)$ for a problem instance consisting of $n=200$ items. Using $\alpha=0.9$, the smallest value after which $h(g)$ does not decrease a lot anymore is $g^*=20$.}
\label{gStarFigure}       
\end{figure}
Using dynamic programming, it is possible to calculate $g^*$, $h(g^*)$, the centroids $\mu_1, \mu_2, \ldots, \mu_{g^*}$ that minimize $h(g^*)$ and the $g^*$ groups induced by these centroids with a time complexity of $O(n^2g^*)$ \citep{Wang:2011}.  This can be further optimized to $O(ng^*)$, because of the special structure of the dynamic programming formulation \citep{Gronlund:2017}. In the current paper, we implemented the $O(ng^*)$ algorithm.

The value of $g^*$ is typically small if the weights of the items occur in clusters. This also allows us to calculate $z$ from Theorem \ref{cardinalityTheorem} relatively efficiently. More specifically, the hardest problem to compute $z$ is to compute the following expression:
$$ \smax{0 \leq n_i \leq s_i, \text{integer}\\ \sum_{i=1}^{g-1} w_{m_i} n_i \le c} \Big(\sum_{i=1}^{g-1} w_{m_{i-1}+1} n_i\Big) $$
This expression can be computed by viewing it as a special case of the bounded knapsack problem. The bounded knapsack problem has been studied by several authors before and efficient algorithms are available in the literature \citep{Pisinger:2000,Andonov:2000,Becker:2019}. In the current paper, we used the \textit{\textbf{Bouknap}} algorithm from \cite{Pisinger:2000}. In a problem instance of the bounded knapsack problem, we are given a knapsack with a capacity of $c$ and $n$ different types of items. For item $i$ ($1 \leq i \leq n$), there are $s_i$ identical copies available and each copy has a profit of $p_i$ and a weight of $w_i$. The goal of the problem is to select a set of items such that the total profit is maximized and the total weight does not exceed the knapsack capacity. That is, we want to compute:
$$ \smax{0 \leq n_i \leq s_i, \text{integer}\\ \sum_{i=1}^{n} w_{m_i} n_i \le c} \Big(\sum_{i=1}^{n} p_i n_i\Big) $$
From this expression it now follows that computing
$$ \smax{0 \leq n_i \leq s_i, \text{integer}\\ \sum_{i=1}^{g-1} w_{m_i} n_i \le c} \Big(\sum_{i=1}^{g-1} w_{m_{i-1}+1} n_i\Big) $$
can be regarded as solving a bounded knapsack problem instance with $n=g-1$ and $p_i=w_{m_{i-1}+1}$ ($\forall i \in \{1, 2, \ldots, n\}$). Hence, we can reduce our problem to a standard problem for which efficient solvers exist \citep{Pisinger:2000,Andonov:2000,Becker:2019}.
\subsection{Discussion}
\label{discussion}
Since some of the features from the previous subsections can be extremely large for many practical problem instances (e.g. $\left|X\right|>10^{100}$ cannot be stored in a standard 64-bit integer), special care must be taken when implementing the algorithms. A straightforward way to deal with this, is to use data structures that support integer arithmetic on arbitrarily large integers. One of the biggest disadvantages of this approach, however, is that even simple computations can become slow. In our implementation of some of the algorithms, we used a different well-known approach when performing calculations with big numbers. We keep track of the binary logarithm of all numbers involved in the computations instead of the numbers themselves, which allows us to store all the necessary numbers in standard 64-bit floating point numbers. All involved operators can then be rewritten using standard logarithm rules (e.g. $\log(ab)=\log(a)+\log(b)$), which allows us to avoid working with big numbers directly. This implementation detail makes the algorithms fast enough to compute all features for reasonably large problem instances. A drawback of this approach is that we are unable to store very big numbers exactly (we only use 64 bits of precision), but this is not important for the purposes of this paper since we are mostly interested in the orders of magnitude and do not wish to distinguish cases where $\left|X\right|=10^{100}$ or $\left|X\right|=10^{100}+1$ for instance.

We also want to highlight that some features can sometimes still be very costly to compute, despite the fact that the algorithms that we propose are much faster than the straightforward exponential algorithms that would follow from directly using the definition of the features. For example, our algorithm to compute $\left|X\right|$ runs in $O(nc)$ time, which greatly improves the naive $O(n 2^n)$ approach (the naive approach explicitly builds $X$ by enumerating every solution and checking whether it is an IMS or not). As we mentioned before, computing the computationally challenging features from this paper can be more costly than computing the optimal solution for the given problem instance, since it is well-known that the 0-1 knapsack problem can be solved in $O(nc)$ time \citep{Bellman:1966} and several algorithms exist that behave much better in practice than their worst-case $O(nc)$ running time \citep{Pisinger:1995, Pisinger:1997, Martello:1999}. The cost of computing the features also affects how they can be used to answer different questions in different contexts and some features are more appropriate for one context, whereas other features are more appropriate for another context.

Some of the features that we propose are probably too costly to directly use in the context of algorithm selection without prior knowledge, where the features need to be computed for every problem instance before solving it and we do not have any prior knowledge about the problem instances. However, these features can be used in the context of algorithm selection if certain conditions are met. For instance, it is worthwhile investigating whether the proposed features can be approximated and computed more efficiently such that they become useful in the context of algorithm selection (regardless of prior knowledge). Furthermore, costly features can also be useful for algorithm selection scenarios where prior knowledge is available or applications where features are taken as a one-time input, but models are built repeatedly \citep{Hutter:2014}. One often deals with a scenario in which one needs to solve a large number of highly related problem instances that stem from the same application repeatedly (e.g. every hour) such that a fixed algorithm can be expected to behave similarly on all of these problem instances. Hence, in the case of algorithm selection with prior knowledge it is possible to learn an algorithm selector on a dataset consisting of more informative (and costly) features during the learning phase, because in the deployment phase the features only have to be computed for one or a handful of problem instances as opposed to the whole dataset.

Apart from algorithm selection scenarios, features also play a central role in problem instance space analysis. In this context, it is not always important how the cost of computing the features for a given problem instance relates to the cost of solving the problem instance to optimality, because the goal is to obtain deeper insight into the structure of the problem instance space. In this case more informative features (which are typically also more costly) are more useful. Instance space analysis has been used before to investigate a wide range of interesting questions such as  ``Where are the hard problem instances located?'', ``Which features correlate with problem instance hardness?'', ``How similar are two given problem instances?'' and ``Is a given dataset of problem instances varied enough to be a representative benchmark with regard to a given set of features?''. Hence, the features that we propose in the current paper are more suitable for the purpose of instance space analysis than for the purpose of algorithm selection (also recall from before that \textit{\textbf{Combo}} is almost always the fastest algorithm, which makes the benefit of using algorithm selection with the currently existing algorithms low anyhow).

The (features derived from) the bounds of the previous subsection are also of independent interest, because of their potential to be integrated into existing knapsack solvers. For instance, one of the key ideas for the most successful 0-1 knapsack algorithm from the literature (\textit{\textbf{Combo}}) was to add a valid cardinality constraint to the 0-1 knapsack problem without changing the optimum. This cardinality constraint can be surrogate relaxed with the capacity constraint and its linear relaxation can be efficiently solved and yields strong dual bounds. Since the optimal solution to the 0-1 knapsack problem is always an IMS, the bounds derived in Subsection \ref{boundsFeatures} can also be added to the problem without changing the optimum and this could be interesting to obtain strong relaxations. We encourage further research to investigate which relaxations would work best (e.g. linear relaxations, surrogate relaxations or Lagrangian-based relaxations) and how these relaxations could be efficiently solved.
	\section{Experiments}
\label{experimentsSection}
\subsection{Control experiment: correctness verification and sanity checks}
The algorithms that we propose in Section \ref{featureSection} to calculate the different features have several tricky implementation details (see the discussion in Subsection \ref{discussion}). We already formally proved that these algorithms are correct in Section \ref{featureSection}, but this does not rule out potential implementation bugs. Apart from carefully implementing these algorithms, we also systematically verified that we did not introduce any bugs by comparing the output of our algorithms with the output of several brute force algorithms that we also implemented. These brute force algorithms are very slow, but very easy to implement and are useful tools to experimentally verify the correctness of faster, but harder algorithms. More specifically, for the features described in Subsection \ref{countingBasedFeatures} we also implemented a brute force algorithm that explicitly generates the set of all IMSs $X$ by considering all $2^n$ solutions and filtering out the IMSs. The features were then calculated by directly using the definition and compared with the features that were calculated by the dynamic programming algorithm from Subsection \ref{dynamicProgramming}. The brute force algorithm also allowed us to do a sanity check to verify that the bound that we obtained in Theorem \ref{lowerBoundTheorem} is correct, by checking the inequality for every IMS in $X$. To do a sanity check for Theorem \ref{cardinalityTheorem}, we implemented another brute force algorithm that generates every partition of the set of items $\{1, 2, \ldots, n\}$ and checks the inequality for every IMS. A slight modification of this algorithm was also used to calculate $h(g)$ and $g^*$ from Subsection \ref{boundsFeatures} (the optimal centroids that minimize $h(g)$ are given by the average of the weights of each group induced by the partition \citep{Gronlund:2017}). The output of all brute force algorithms was compared with the output of the algorithms proposed in Section \ref{featureSection} for $10^4$ small instances that were generated by choosing the number of items $n$ uniformly at random between 5 and 12, choosing the knapsack capacity $c$ uniformly at random between 2 and $10^8$ and choosing the profits and weights of each item uniformly at random between 1 and $c$. Through this control experiment we were able to experimentally verify that the results obtained by our algorithms matched the expected results, as desired.

\subsection{Influence of the features on the running time and a comparison with other features}
In this subsection, we show that the features that we propose in the current paper strongly affect the running time of the state-of-the-art knapsack algorithm \textit{\textbf{Combo}} \citep{Martello:1999}. We also show that these features are more informative for the running time (and more costly to compute) than other features from the literature. For this experiment, we used two datasets of problem instances from the literature. The first dataset (dataset A) was obtained by taking the 3000 most difficult problem instances from \cite{Pisinger:2005} and the second dataset (dataset B) consists of problem instances from a recently proposed class of hard problem instances \citep{Jooken:2022}. For both datasets we filtered out the problem instances with a knapsack capacity that exceeds $10^8$ (to ensure that the computational cost of calculating the features is more manageable), resulting in 1422 problem instances out of 3000 for dataset A and 2160 problem instances out of 3240 for dataset B. For these problem instances, the number of items $n$ varies between 50 and $10^4$ and the knapsack capacity $c$ varies between 40264 and $10^8$. For both datasets, we calculated all features from the current paper (see Table \ref{featuresOverviewTable}), all features from \cite{Smith-Miles:2021} and we solved all problem instances to optimality with \textit{\textbf{Combo}}. This experiment was conducted on the ThinKing cluster of the Flemish Supercomputer Center (VSC), and took around 540 CPU-hours using powerful CPUs with a clock rate of 2.5 GHz and 10 GB RAM memory. In this context, we also briefly mention the recent dataset of problem instances from \cite{Smith-Miles:2021}. We considered also using this dataset, but finally decided not to use it because the problem instances from this dataset could be solved in a couple of milliseconds (i.e. they are not hard) and we were unable to reliably measure such small running times, leading to a very low signal-to-noise ratio.

After calculating the raw features, we normalized them to make the features comparable in magnitude and reasonably small. The features from \cite{Smith-Miles:2021} were already normalized, whereas the features from the current paper (see Table \ref{featuresOverviewTable}) were normalized as follows. For the features $|X|$, $t_1$, $t_2$ and $t_3$ (as well as the dependent variable: the running time of \textit{\textbf{Combo}}) we took the (binary) logarithm (because their domains were either unbounded or very large) and for all other features we applied min-max normalization. Using these normalized features, we trained XGBoost \citep{Chen:2016} (a state-of-the-art regression model) to predict (the binary logarithm of) the running time of \textit{\textbf{Combo}} based on various sets of features leading to various models. Note that the goal of this experiment is to get new insights into what easy and hard problem instances look like and which features are related to the hardness of a problem instance (if we would only care about the running time of \textit{\textbf{Combo}}, it would be more efficient to just run \textit{\textbf{Combo}}). Both datasets were randomly split into a training set to train the model (consisting of 80\% of the data) and a test set (consisting of the remaining 20\% of the data) to test the results of the model on unseen data. The hyperparameters of each XGBoost model were tuned by using grid search in combination with 10-fold cross-validation.

In the first experiment, we trained three different XGBoost models for each dataset: (i) a model that uses the features from \cite{Smith-Miles:2021}, (ii) a model that uses the features from the current paper, and (iii) a model that uses the features from both \cite{Smith-Miles:2021} and the current paper. To put the obtained results into perspective, we also compared this model against a baseline model, which always predicts a constant value that minimizes the sum of the squared errors on the training set (the minimizer of this function is in fact the average of the dependent variable over the training set). More formally, the baseline model predicts the following value:
$$\argmin_{x \in \mathbb{R}} \Bigg( \sum_{i=1}^{\left| \text{train} \right|} (x-y_i)^2 \Bigg) = \frac{\sum_{i=1}^{\left| \text{train} \right|} y_i }{\left| \text{train} \right|}$$
Here, $\left| \text{train} \right|$ represents the size of the training set and the $y_i$'s represent the dependent variable (the binary logarithm of the running time of \textit{\textbf{Combo}}). If we denote by $\left| \text{test} \right|$ the size of the test set, by $y_i$ observation $i$ of the dependent variable, by $\hat{y_i}$ the prediction for observation $i$ and by $\bar{y}=\argmin_{x \in \mathbb{R}} \Bigg( \sum_{i=1}^{\left| \text{test} \right|} (x-y_i)^2 \Bigg)=\frac{\sum_{i=1}^{\left| \text{test} \right|}y_i}{\left| \text{test} \right|}$ the average of the dependent variable over the test set, then the mean squared error is defined as $\text{MSE}=\frac{\sum_{i=1}^{\left| \text{test} \right|} (\hat{y_i}-y_i)^2}{\left| \text{test} \right|}$ and the $R^2$ value is defined as $R^2 = 1-\frac{\sum_{i=1}^{\left| \text{test} \right|} (\hat{y_i}-y_i)^2}{\sum_{i=1}^{\left| \text{test} \right|} (\bar{y_i}-y_i)^2}$ (this measures the proportion of the variation of the dependent variable that the model can explain based on the independent variables). Hence, lower MSE values and higher $R^2$ values indicate a better performing model.

In Table \ref{datasetPisinger} and Table \ref{datasetJooken} we show the mean squared error (MSE) and the $R^2$ value on the test set of dataset A and dataset B respectively for both the baseline model and each of the three XGBoost models. For both datasets, we see that both the features proposed in \cite{Smith-Miles:2021} and the features from the current paper are important indicators of the running time of \textit{\textbf{Combo}}. The MSEs of all three XGBoost models are considerably smaller than the MSE of the baseline model and analogously the $R^2$ values are considerably higher. If we compare the three XGBoost models for dataset A, we see that the model which uses the features from \cite{Smith-Miles:2021} has less predictive power than the model which uses the features from the current paper. Furthermore, for dataset A both models have less predictive power than the model which uses both sets of features and this indicates that both sets of features contain useful information that is not present in the other set of features. The MSEs and $R^2$ values of these three models are also quite different. The MSEs are 1.582, 1.034 and 0.843 and the $R^2$ values are 0.422, 0.623 and 0.692 for model (i), (ii) and (iii) respectively. For dataset B, the models are ranked from having least predictive power to most predictive power as follows: first model (i), then model (iii) and finally model (ii). For this dataset, the MSEs and $R^2$ values of the three XGBoost models are more similar than for dataset A and the predictive power of the three models is quite comparable. The MSEs for dataset B are 1.868, 1.727 and 1.821 and the $R^2$ values are 0.804, 0.819 and 0.809 for model (i), (ii) and (iii) respectively. The proportion of the variation of the dependent variable that the model can explain by using the independent variables (i.e. the $R^2$ value) is considerably closer to 1 for dataset B than for dataset A (a model that would always perfectly predict the dependent variable would have an $R^2$ value of 1). For both datasets we conclude that the features that we propose in the current paper strongly affect the running time of \textit{\textbf{Combo}} and that the XGBoost model with the most predictive power is the one that uses the features from the current paper (together with the features of \cite{Smith-Miles:2021} for dataset A and in isolation for dataset B). As we mentioned before, the informativeness of the features from the current paper also comes at an increased computing cost (see Table \ref{featureTimesTable}). However, recall from before that we are interested in assessing whether the features are important hardness indicators or not and the running times of calculating the features are less important for assessing the informativeness of the features.

\begin{table}[h!] \scriptsize\centering 
	\begin{threeparttable}
		\caption{Mean squared error and $R^2$ value on the test set of dataset A \citep{Pisinger:2005} for various models.}
		\label{datasetPisinger} 
		\begin{tabular}{cccccccccccccccccccccc} \\
			\hline\noalign{\smallskip} 
			\multicolumn{1}{c}{} & \multicolumn{2}{c}{Features}\\
			\cmidrule(lr){2-3}
			\noalign{\smallskip} 
			Model & \cite{Smith-Miles:2021} & Current paper & MSE & $R^2$\\ 
			\noalign{\smallskip}
			\hline
			\noalign{\smallskip} 
			\multicolumn{1}{c}{Baseline} & \xmark & \xmark & 2.752 & -0.005 \\ 	
			\noalign{\smallskip}
			\hline
			\noalign{\smallskip}
			\multicolumn{1}{c}{XGBoost} & \cmark & \xmark & 1.582 & 0.422 \\ 
			\multicolumn{1}{c}{XGBoost} & \xmark & \cmark & 1.034 & 0.623 \\ 
			\multicolumn{1}{c}{XGBoost} & \cmark & \cmark & 0.843 & 0.692 \\ 
			\noalign{\smallskip}
			\hline
		\end{tabular}
	\end{threeparttable}
\end{table}

\begin{table}[h!] \scriptsize\centering 
	\begin{threeparttable}
		\caption{Mean squared error and $R^2$ value on the test set of dataset B \citep{Jooken:2022} for various models.}
		\label{datasetJooken} 
		\begin{tabular}{cccccccccccccccccccccc} \\
			\hline\noalign{\smallskip} 
			\multicolumn{1}{c}{} & \multicolumn{2}{c}{Features}\\
			\cmidrule(lr){2-3}
			\noalign{\smallskip} 
			Model & \cite{Smith-Miles:2021} & Current paper & MSE & $R^2$\\ 
			\noalign{\smallskip}
			\hline
			\noalign{\smallskip} 
			\multicolumn{1}{c}{Baseline} & \xmark & \xmark & 9.549 & -0.003 \\ 	
			\noalign{\smallskip}
			\hline
			\noalign{\smallskip}
			\multicolumn{1}{c}{XGBoost} & \cmark & \xmark & 1.868 & 0.804 \\ 
			\multicolumn{1}{c}{XGBoost} & \xmark & \cmark & 1.727 & 0.819 \\ 
			\multicolumn{1}{c}{XGBoost} & \cmark & \cmark & 1.821 & 0.809 \\ 
			\noalign{\smallskip}
			\hline
		\end{tabular}
	\end{threeparttable}
\end{table}

\begin{table}[h!] \scriptsize\centering 
	\begin{threeparttable}
		\caption{An overview of the total running times for different algorithms that were used to compute the features and the dependent variable.}
		\label{featureTimesTable} 
		\begin{tabular}{cccccccccccccccccccccc} \\
			\hline\noalign{\smallskip} 
			dataset & Algorithm & Total running time \\ 
			\noalign{\smallskip}
			\hline
			\noalign{\smallskip} 
			\multicolumn{1}{c}{A: \cite{Pisinger:2005}} & \cite{Smith-Miles:2021} & 0.09 hours\\ 
			\multicolumn{1}{c}{A: \cite{Pisinger:2005}} & Dynamic programming (Subsection \ref{dynamicProgramming}) & 230.63 hours\\ 
			\multicolumn{1}{c}{A: \cite{Pisinger:2005}} & \cite{Gronlund:2017} & 18.56 hours \\
			\multicolumn{1}{c}{A: \cite{Pisinger:2005}} & \cite{Pisinger:2000} & 0.25 hours\\
			\multicolumn{1}{c}{A: \cite{Pisinger:2005}} & \cite{Martello:1999} & 0.19 hours\\
			\multicolumn{1}{c}{B: \cite{Jooken:2022}} & \cite{Smith-Miles:2021} & 0.04 hours \\ 
			\multicolumn{1}{c}{B: \cite{Jooken:2022}} & Dynamic programming (Subsection \ref{dynamicProgramming})  &  283.68 hours\\ 
			\multicolumn{1}{c}{B: \cite{Jooken:2022}} & \cite{Gronlund:2017} & 0.04 hours \\
			\multicolumn{1}{c}{B: \cite{Jooken:2022}} & \cite{Pisinger:2000} & 0.22 hours \\
			\multicolumn{1}{c}{B: \cite{Jooken:2022}} & \cite{Martello:1999} & 7.84 hours\\    
			\noalign{\smallskip}
			\hline
		\end{tabular}
	\end{threeparttable}
\end{table}

\subsection{Identifying the most informative features}
\label{mostInformativeFeaturesSubsection}
In the second experiment, we try to identify the most informative features for the XGBoost model to predict the running time of \textit{\textbf{Combo}}. In this experiment we consider 58 features: the 44 features proposed in \cite{Smith-Miles:2021} and the 14 features proposed in the current paper. We trained 58 different XGBoost models; each model was trained using all 58 features except for one feature. If omitting a specific feature results in an increased MSE, then we can conclude that this feature is likely to be important since the XGBoost model performs worse when this feature is not available. It is also possible that omitting a certain feature results in a decreased MSE, but then it is more difficult to draw a sound conclusion. In the latter case, it is possible that the feature is not important to predict the dependent variable, but it could also be the case that this feature is highly correlated with another independent variable such that the marginal contribution of this feature is low (e.g. a set of two or more very similar features can be very important to predict the dependent variable, but omitting one of them does not have a big effect). The results of this experiment can be found in Table \ref{omitExperimentPisinger} and Table \ref{omitExperimentJooken} for dataset A and B respectively. For each dataset, we ranked the 58 XGBoost models from high MSE to low MSE and we show the MSEs of the models corresponding to the six highest ranked and two lowest ranked features (as well as the reference model which uses all features).

For both datasets, we can see that the MSE increases considerably (and the $R^2$ value decreases considerably) when a highly ranked feature is omitted. Interestingly, we also see that several features obtain a high rank for both datasets (e.g. $t_1$, $t_3$ and $\smin{\mathbf{x} \in X} \text{totalWeight}(\mathbf{x})$ from the current paper). This indicates that these features can be expected to be informative for predicting the running time of \textit{\textbf{Combo}}. The running times $t_1$ and $t_3$ are highly ranked for both datasets and $t_2$ is highly ranked for dataset A (rank 2), but not so highly ranked for dataset B (rank 46). They measure the running times for computing various features related to IMSs (see Table \ref{featuresOverviewTable}) and this confirms that the structure of IMSs can be an important indicator of the hardness of a problem instance, despite the fact that these three algorithms are all based on very different ideas than \textit{\textbf{Combo}}.

\begin{table}[h!] \scriptsize\centering 
	\begin{threeparttable}
		\caption{Models trained on all features except for one feature, ranked according to MSE for the test set of dataset A \citep{Pisinger:2005}}
		\label{omitExperimentPisinger} 
		\begin{tabular}{cccccccccccccccccccccc} \\
			\hline
			\noalign{\smallskip} 
			Features used by XGBoost & Rank & MSE & $R^2$\\ 
			\noalign{\smallskip}
			\hline
			\noalign{\smallskip}
			\multicolumn{1}{c}{All features} & NA & 0.843 & 0.692\\ 
			\noalign{\smallskip}
			\hline
			\noalign{\smallskip}
			\multicolumn{1}{c}{All features except}\\
			\multicolumn{1}{l}{Current paper: $t_1$} & 1 & 1.084 & 0.604\\ 
			\multicolumn{1}{l}{Current paper: $t_2$} & 2 & 0.967 & 0.647\\
			\multicolumn{1}{l}{Current paper: $\smin{\mathbf{x} \in X} \text{totalWeight}(\mathbf{x})$} & 3 & 0.925 & 0.662\\ 
			\multicolumn{1}{l}{\cite{Smith-Miles:2021}: First Weight} & 4 & 0.889 & 0.675\\
			\multicolumn{1}{l}{\cite{Smith-Miles:2021}: Reduced Polyfit Linear} & 5 & 0.889 & 0.675\\ 
			\multicolumn{1}{l}{Current paper: $t_3$} & 6 & 0.889 & 0.675\\
			\multicolumn{1}{l}{...} & ... & ... & ...\\
			\multicolumn{1}{l}{\cite{Smith-Miles:2021}: Coefficient of Variation of Efficiencies} & 57 & 0.828 & 0.698\\ 
			\multicolumn{1}{l}{\cite{Smith-Miles:2021}: Coefficient of Variation of Weights} & 58 & 0.782 & 0.714\\ 
			\noalign{\smallskip}
			\hline
		\end{tabular}
	\end{threeparttable}
\end{table}

\begin{table}[h!] \scriptsize\centering 
	\begin{threeparttable}
		\caption{Models trained on all features except for one feature, ranked according to MSE for the test set of dataset B \citep{Jooken:2022}}
		\label{omitExperimentJooken} 
		\begin{tabular}{cccccccccccccccccccccc} \\
			\hline
			\noalign{\smallskip} 
			Features used by XGBoost & Rank & MSE & $R^2$\\ 
			\noalign{\smallskip}
			\hline
			\noalign{\smallskip}
			\multicolumn{1}{c}{All features} & NA & 1.821 & 0.809\\ 
			\noalign{\smallskip}
			\hline
			\noalign{\smallskip}
			\multicolumn{1}{c}{All features except}\\
			\multicolumn{1}{l}{Current paper: $t_1$} & 1 & 2.031 & 0.787\\ 
			\multicolumn{1}{l}{Current paper: $t_3$} & 2 & 1.965 & 0.794\\
			\multicolumn{1}{l}{Current paper: $f$} & 3 & 1.878 & 0.803\\ 
			\multicolumn{1}{l}{\cite{Smith-Miles:2021}: Reduced Maximum Cardinality} & 4 & 1.856 & 0.805\\
			\multicolumn{1}{l}{Current paper: $\smin{\mathbf{x} \in X} \text{totalWeight}(\mathbf{x})$} & 5 & 1.848 & 0.806\\ 
			\multicolumn{1}{l}{\cite{Smith-Miles:2021}: Smaller Better Pairs} & 6 & 1.842 & 0.807\\
			\multicolumn{1}{l}{...} & ... & ... & ...\\
			\multicolumn{1}{l}{\cite{Smith-Miles:2021}: Reduced Coefficient of Variation of Efficiencies} & 57 & 1.751 & 0.816\\ 
			\multicolumn{1}{l}{\cite{Smith-Miles:2021}: Dominant Pairs} & 58 & 1.744 & 0.817\\ 
			\noalign{\smallskip}
			\hline
		\end{tabular}
	\end{threeparttable}
\end{table}

\subsection{Instance space analysis}

In the last experiment, we use the Instance Space Analysis methodology developed by \cite{Smith-Miles:2012,Smith-Miles:2014,Smith-Miles:2015} to visualize the problem instances in a two-dimensional space. This allows us to investigate where the hard problem instances are and how the features are distributed in the easy and hard regions of the problem instance space. In this methodology, the problem instances are represented by points in a high-dimensional feature space (where the dimension is equal to the number of features). These points are then normalized and projected to a two-dimensional instance space by finding two appropriate linear combinations of these features. We use the tool MATILDA \citep{Smith-Miles:2019} which implements the Instance Space Analysis methodology and uses the PILOT method \citep{Munoz:2018} to find this projection. The PILOT method attempts to find a projection such that both the features and the algorithm's performance (in our case the running time of \textit{\textbf{Combo}}) are distributed in a nearly linear fashion across the instance space. This makes the visualization highly interpretable and useful to visually see patterns connecting the features with problem instance hardness. For this experiment, we used the problem instances from dataset A and B together with the 6 highest ranked features from Table \ref{omitExperimentPisinger} and the 6 highest ranked features from Table \ref{omitExperimentJooken}. There are 3 features in common ($t_1$, $t_3$ and $\smin{\mathbf{x} \in X} \text{totalWeight}(\mathbf{x})$), leading to a total of $6+6-3=9$ features (less than 10 features are recommended for MATILDA). The projection matrix that we obtain for projecting the problem instances as points in a 9D feature space to a 2D instance space with axes $z_1$ and $z_2$ is as follows:

\begin{equation*}
\begin{bmatrix}
z_1\\
z_2\\
\end{bmatrix}
 = 
\begin{bmatrix}
0.2899 & -0.2316 \\
0.2924 & -0.2034 \\
-0.3407 & -0.2515 \\
0.1679 & -0.5357 \\
0.2802 & -0.3762 \\
-0.0796 & -0.5672 \\
0.4686 & 0.6227 \\
0.3397 & -0.3426 \\
-0.5083 & 0.0148 \\
\end{bmatrix}^T
\begin{bmatrix}
t_1 \\
t_2  \\
t_3  \\
f  \\
\smin{\mathbf{x} \in X} \text{totalWeight}(\mathbf{x}) \\
\text{First Weight} \\
\text{Smaller Better Pairs}  \\
\text{Reduced Maximum Cardinality} \\
\text{Reduced Polyfit Linear} \\
\end{bmatrix}
\end{equation*}

The projected problem instances can be found in Fig. \ref{sourcesFigure}, where each problem instance is labelled according to the dataset it is in. This figure reveals that the problem instances from both datasets are quite different with respect to the features. It is remarkable that almost all problem instances from both datasets can be found in two disjoint parts of the instance space, although the PILOT method did not explicitly have this as a constraint for the projection matrix. The problem instances from dataset B occur in a banana-shaped region with extremities around $(z_1,z_2)=(-2,4)$ and $(z_1,z_2)=(1,-1)$, whereas almost all problem instances from dataset A are spread throughout the instance space, but only a few of them are located in this banana-shaped region.
 	\begin{figure}[H]
	\centering
  \includegraphics[width=0.6\linewidth]{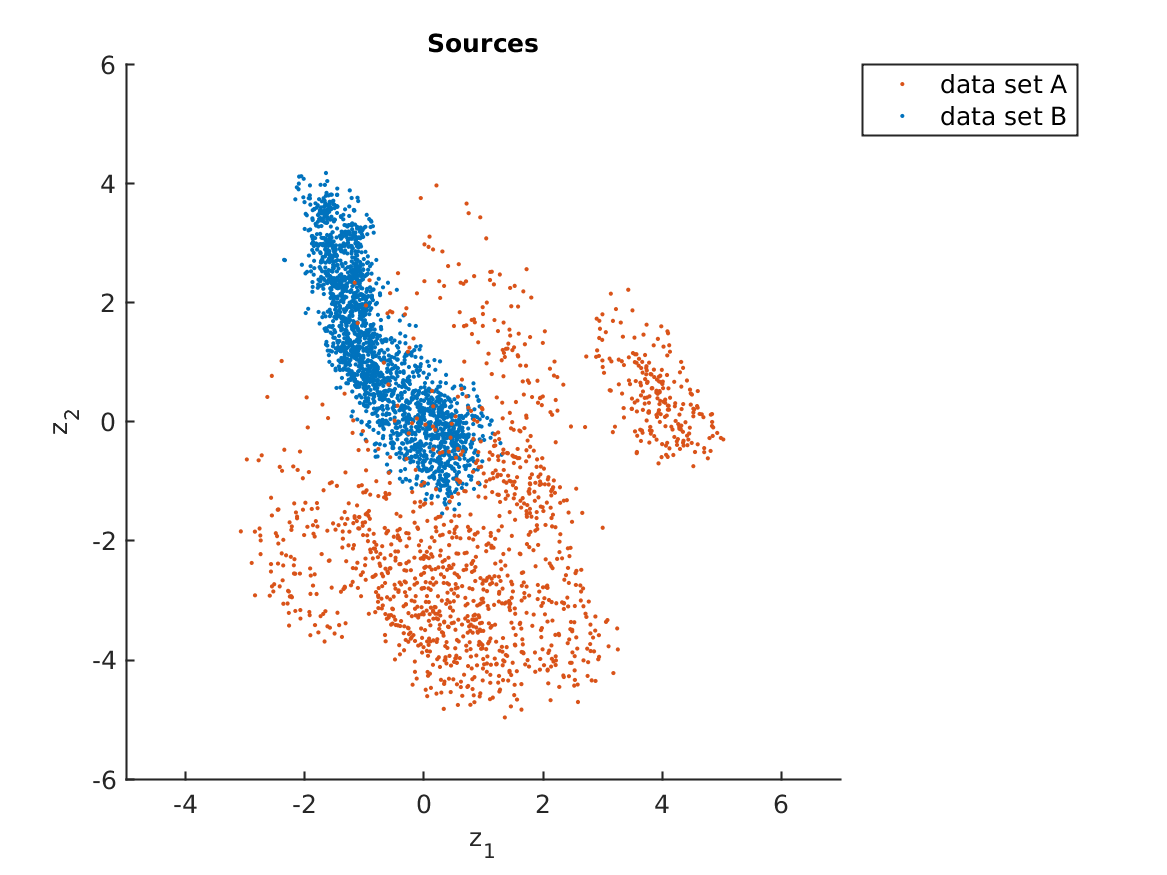}
\caption{The problem instances from dataset A and B visualized in a 2D problem instance space.}
\label{sourcesFigure}       
\end{figure}

 	\begin{figure}[H]
	\centering
  \includegraphics[width=0.6\linewidth]{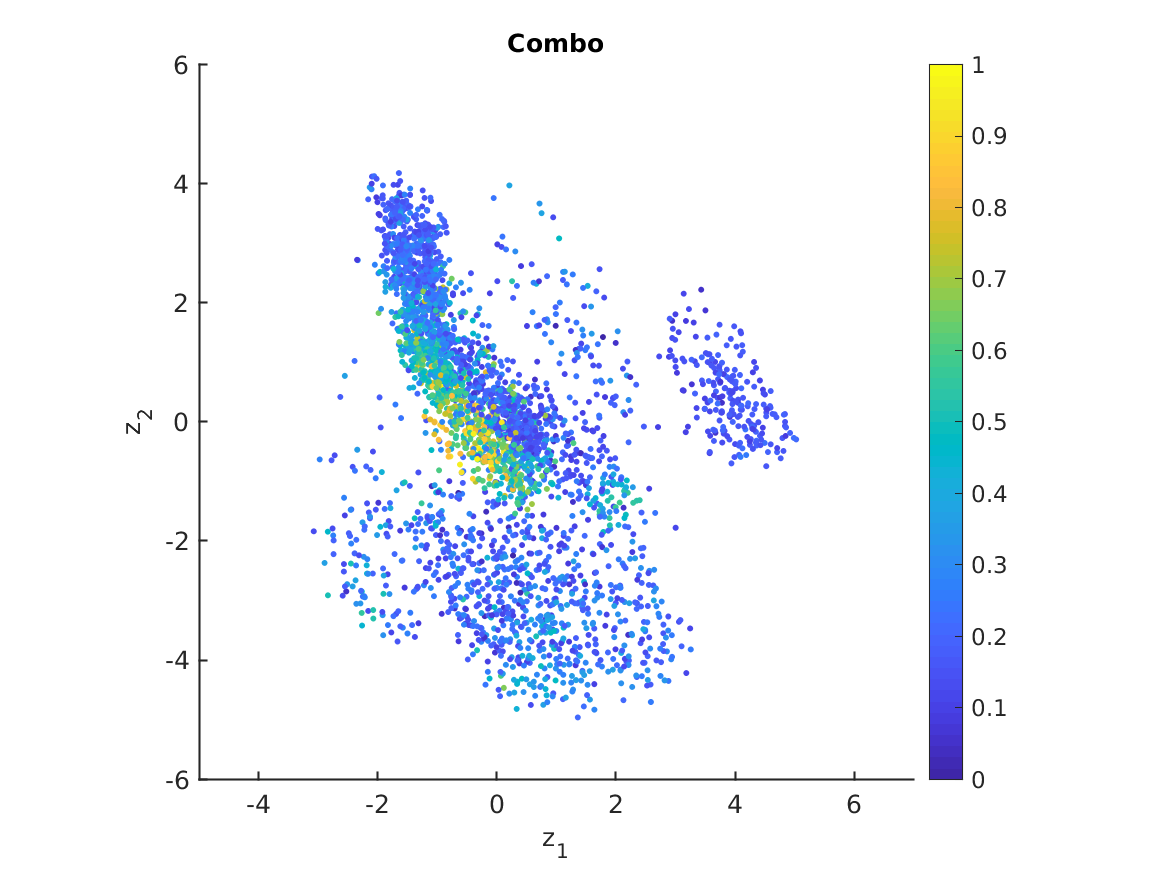}
\caption{The normalized running times of \textit{\textbf{Combo}} visualized in a 2D problem instance space.}
\label{ComboPerformanceFigure}       
\end{figure}

In Fig. \ref{ComboPerformanceFigure}, we can find the binary logarithm of the running time of \textit{\textbf{Combo}} (normalized between 0 and 1) visualized in the same instance space. As we can see, the most difficult problem instances occur in the same region of the instance space near $(z_1,z_2)=(-0.3,-0.3)$. Almost all of these hard problem instances come from dataset B, which is consistent with Table \ref{featureTimesTable}. The hardest problem instances from dataset A on the contrary occur in different regions of the instance space (e.g. one region is near $(z_1,z_2)=(2,-1)$ and another region is near $(z_1,z_2)=(1,-4)$). We also visualized the distribution of the 9 normalized features in Fig. \ref{9FeaturesFigure} (larger versions have been added to the Appendix for the readers' convenience in Figs. \ref{t1Figure}-\ref{reducedPolyfitLinearFigure}). This allows us to see that several features have a different distribution in the hard and easy regions of the instance space, which makes these features important indicators of the hardness of a problem instance. For instance, the features $t_1$ and $t_3$ (Figs. \ref{t1Subfigure} and \ref{t3Subfigure}) tend to be very high near $(z_1,z_2)=(-0.3,-0.3)$ where most hard problem instances are located. This means that, if a problem instance is hard to solve for \textit{\textbf{Combo}}, then typically it is also hard to calculate several features related to the IMSs of that problem instance. Note that the reverse is not always true: there are several places in the instance space for which $t_1$ and $t_3$ are high, but the running time of \textit{\textbf{Combo}} is low. The features $t_2$, First Weight and Reduced Polyfit Linear (Figs. \ref{t2Subfigure}, \ref{firstWeightSubfigure} and \ref{reducedPolyfitLinearSubfigure}) are almost always the same (around 0.6, 0.0 and 1.0 respectively) for the problem instances of dataset B, but they vary more for the problem instances of dataset A. This also explains the outcome of the experiment from Subsection \ref{mostInformativeFeaturesSubsection} in which we saw that these features were very informative to predict the running time of \textit{\textbf{Combo}} for dataset A, but less informative for dataset B. The features $f$, $\smin{\mathbf{x} \in X} \text{totalWeight}(\mathbf{x})$ and Reduced Maximum Cardinality (Fig. \ref{fSubfigure}, \ref{minWeightSubfigure} and \ref{reducedMaximumCardinalitySubfigure}) increase when going from the top left corner of the instance space to the bottom right corner of the instance space. This increase occurs in a nearly linear fashion for the features $f$ and Reduced Maximum Cardinality, whereas the transition is more abrupt for the feature $\smin{\mathbf{x} \in X} \text{totalWeight}(\mathbf{x})$. The feature Smaller Better Pairs (Fig. \ref{smallerBetterPairsSubfigure}) on the other hand increases in a nearly linear fashion when going from the bottom left corner of the instance space to the top right corner of the instance space. Since we can see that these features behave differently in different parts of the instance space, the combined behaviour of all the features together is useful to distinguish the different parts of the instance space and this visually explains why the features are useful to assess the hardness of a problem instance.

\begin{figure}[H]
\centering
\begin{subfigure}{0.25\linewidth}
\centering\includegraphics[width=1.1\linewidth]{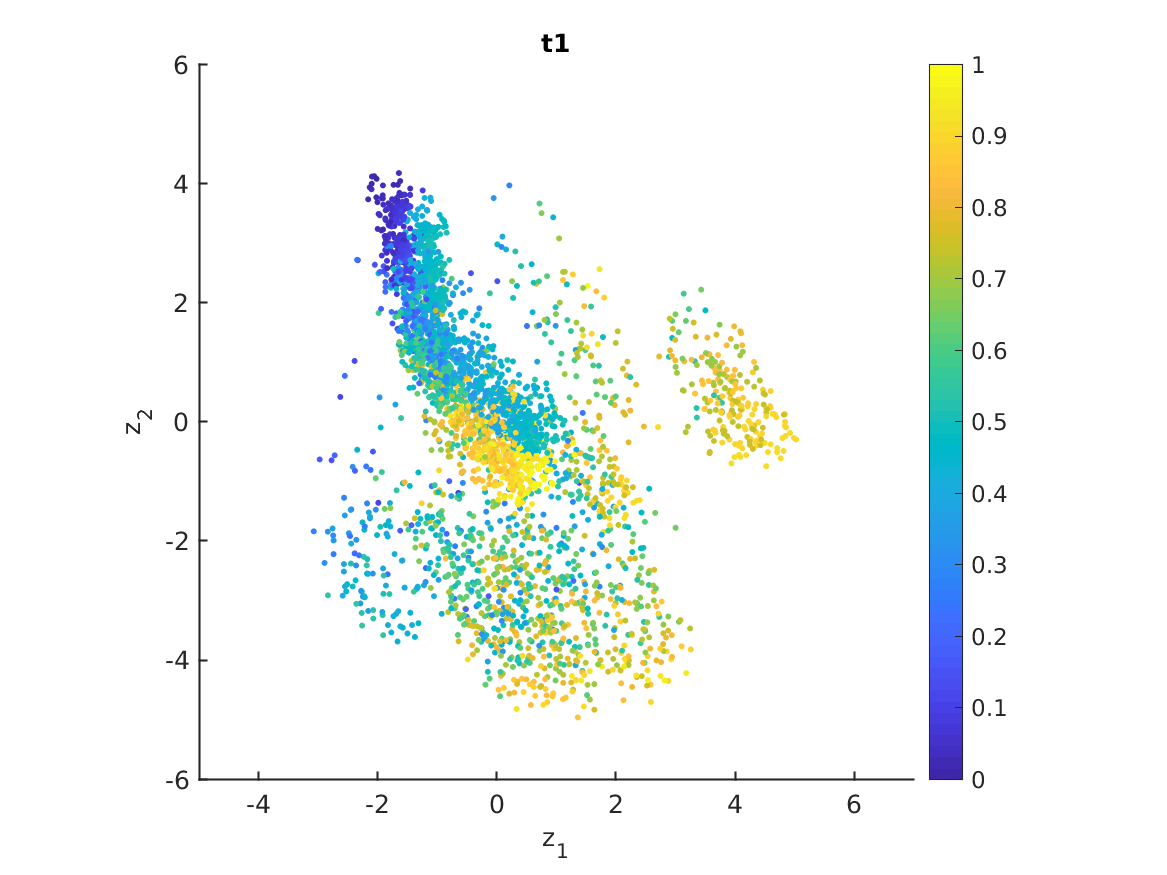}
\caption{}
\label{t1Subfigure}
\end{subfigure}%
\begin{subfigure}{0.25\linewidth}
\centering\includegraphics[width=1.1\linewidth]{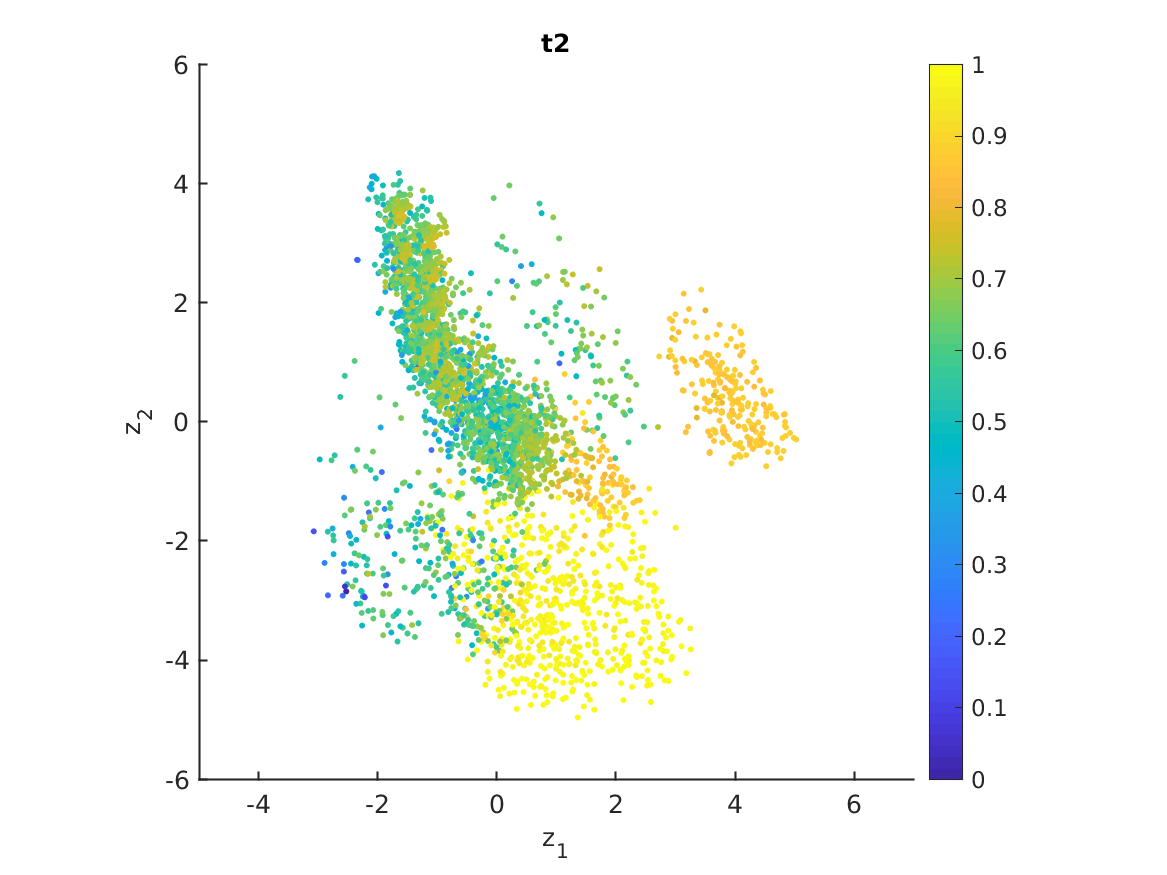}
\caption{}
\label{t2Subfigure}
\end{subfigure}
 \begin{subfigure}{0.25\linewidth}
\centering\includegraphics[width=1.1\linewidth]{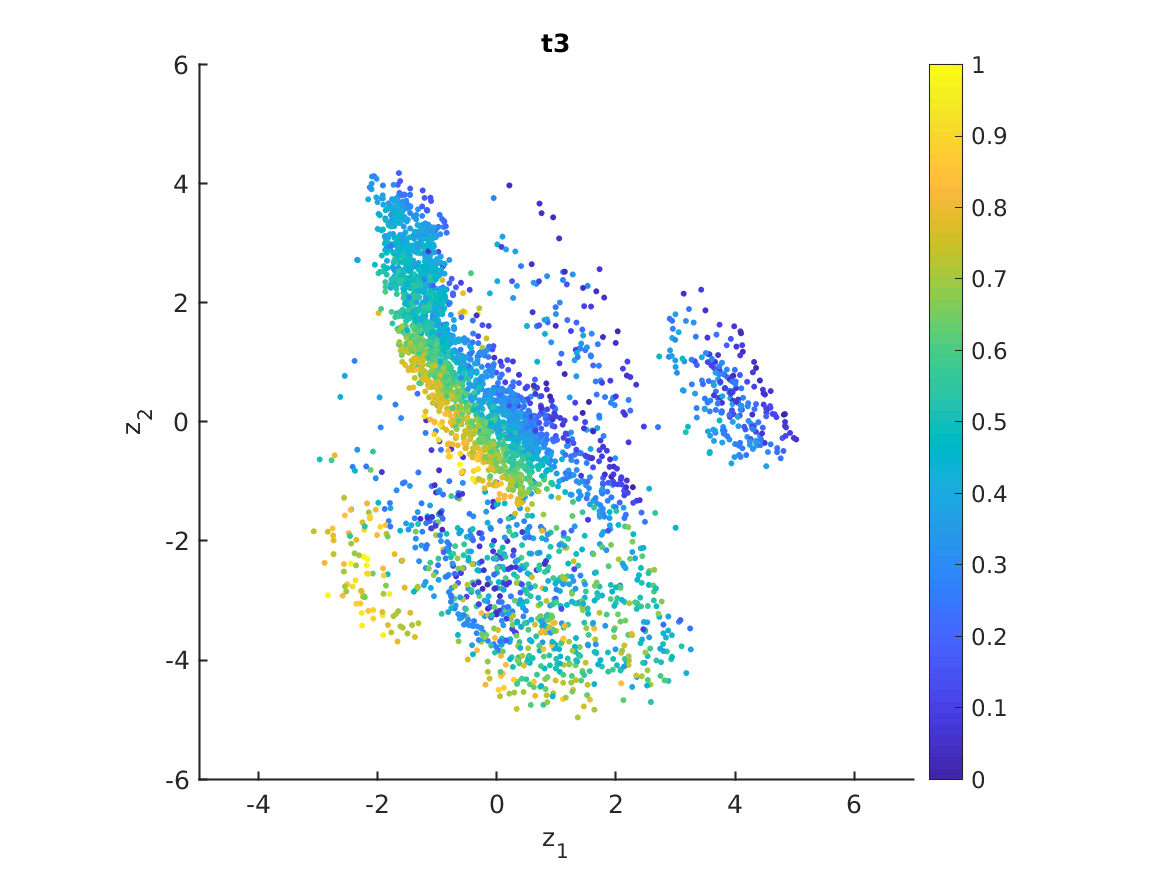}
\caption{}
\label{t3Subfigure}
\end{subfigure}\vspace{10pt}

\begin{subfigure}{0.25\linewidth}
\centering\includegraphics[width=1.1\linewidth]{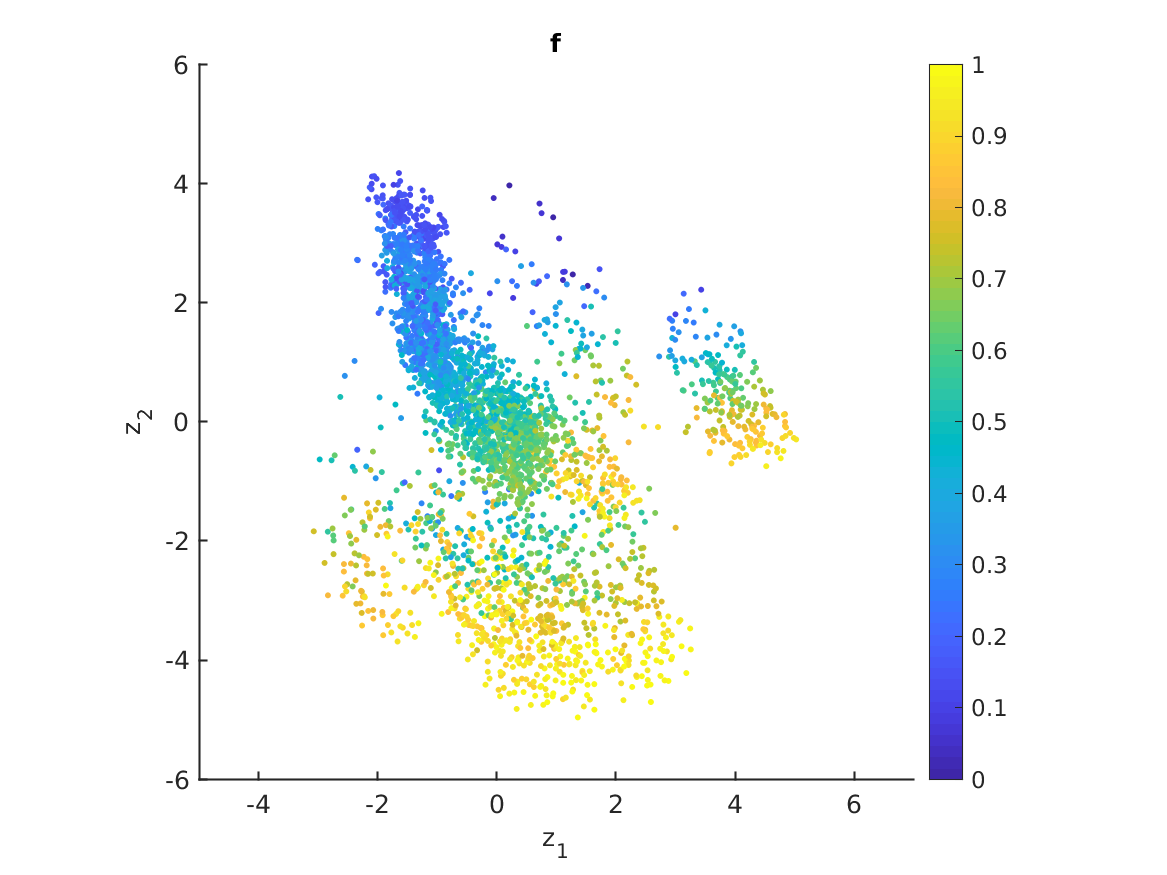}
\caption{}
\label{fSubfigure}
\end{subfigure}%
\begin{subfigure}{0.25\linewidth}
\centering\includegraphics[width=1.1\linewidth]{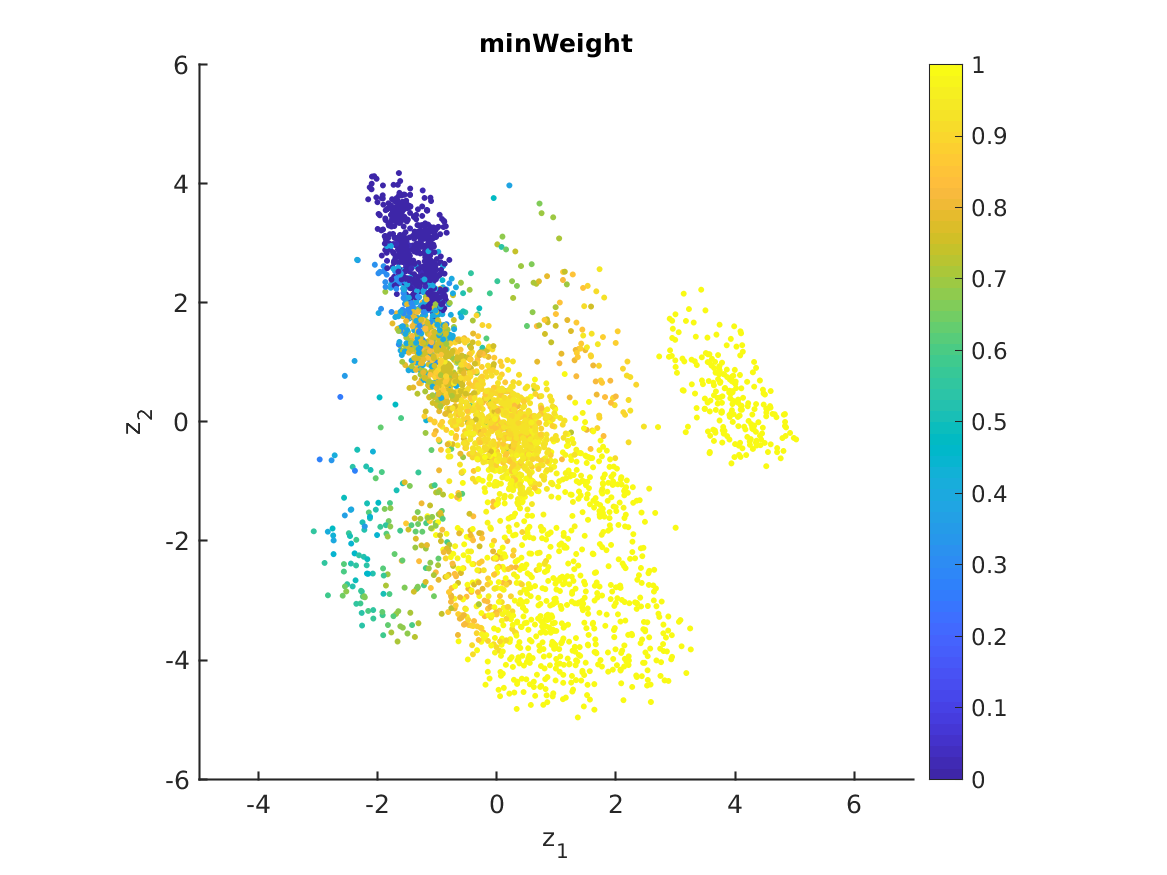}
\caption{}
\label{minWeightSubfigure}
\end{subfigure}
 \begin{subfigure}{0.25\linewidth}
\centering\includegraphics[width=1.1\linewidth]{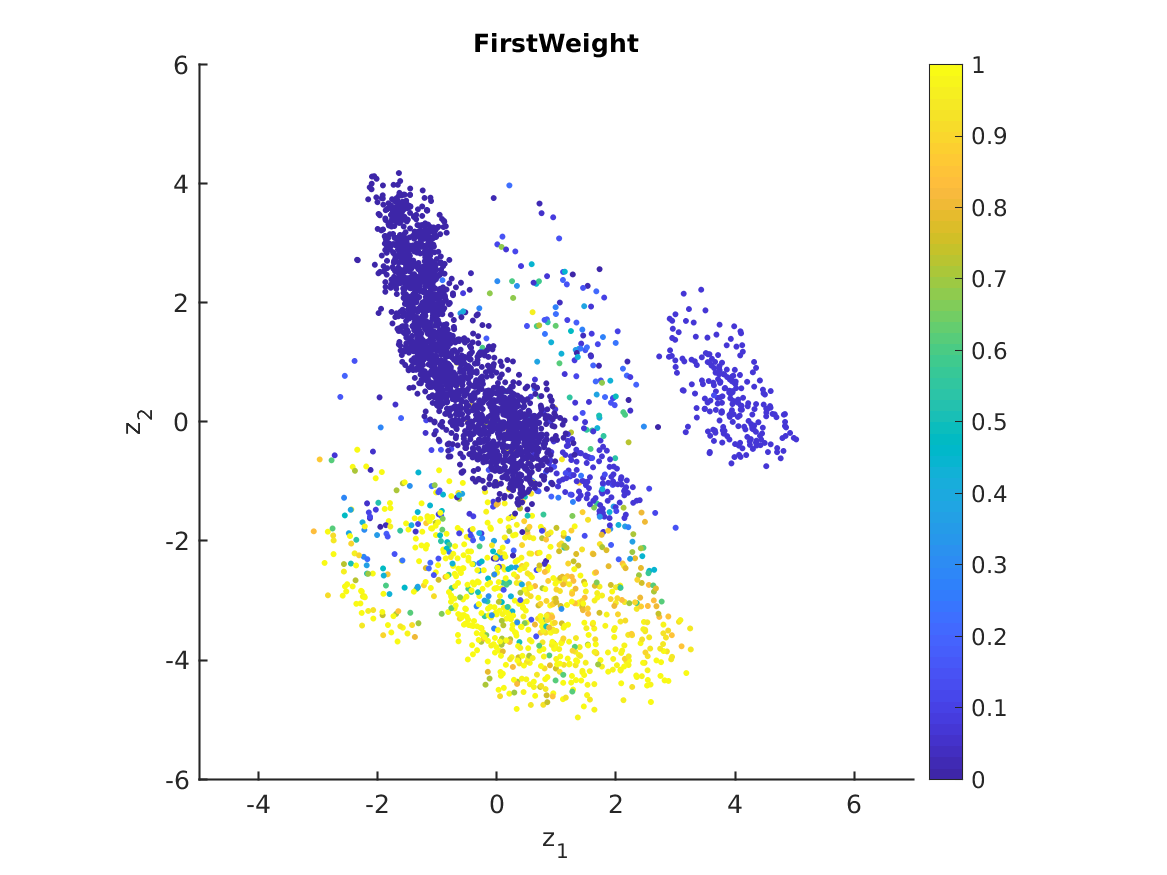}
\caption{}
\label{firstWeightSubfigure}
\end{subfigure}\vspace{10pt}

\begin{subfigure}{0.25\linewidth}
\centering\includegraphics[width=1.1\linewidth]{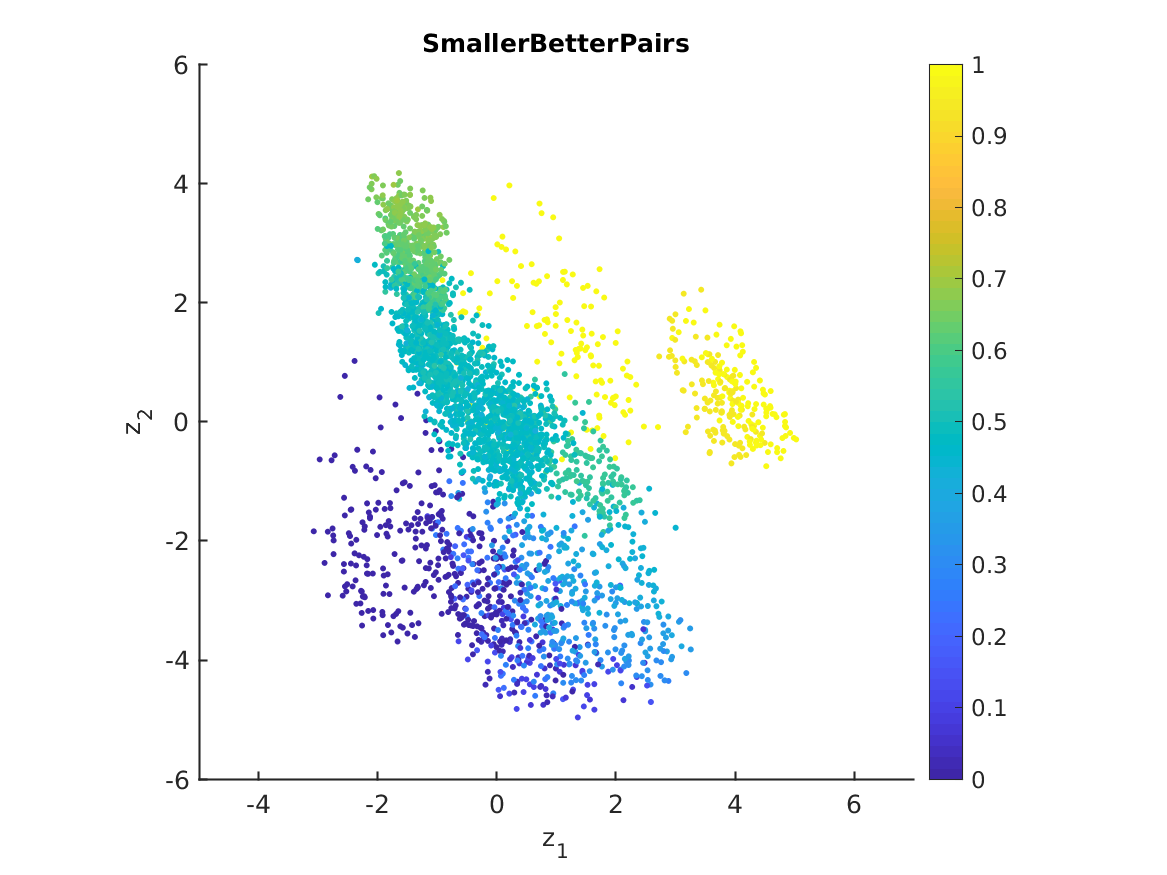}
\caption{}
\label{smallerBetterPairsSubfigure}
\end{subfigure}%
\begin{subfigure}{0.25\linewidth}
\centering\includegraphics[width=1.1\linewidth]{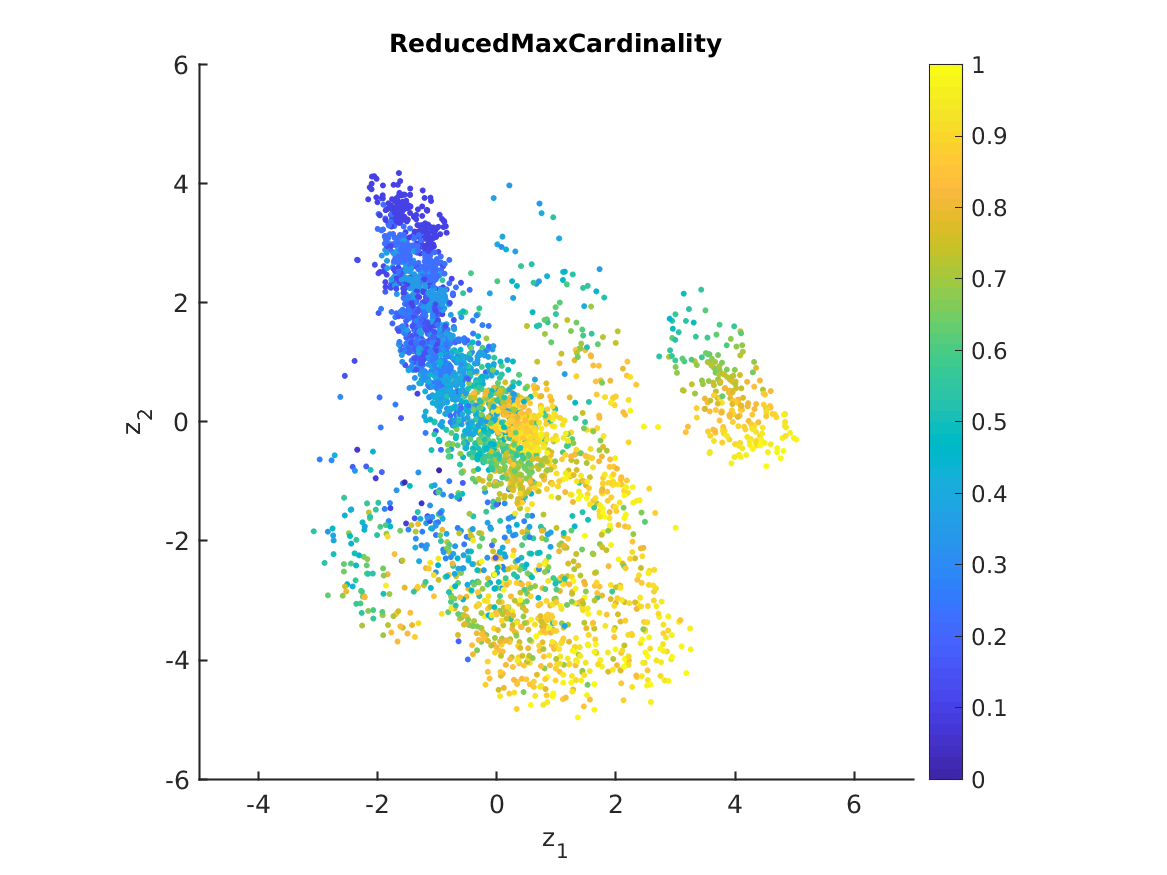}
\caption{}
\label{reducedMaximumCardinalitySubfigure}
\end{subfigure}
 \begin{subfigure}{0.25\linewidth}
\centering\includegraphics[width=1.1\linewidth]{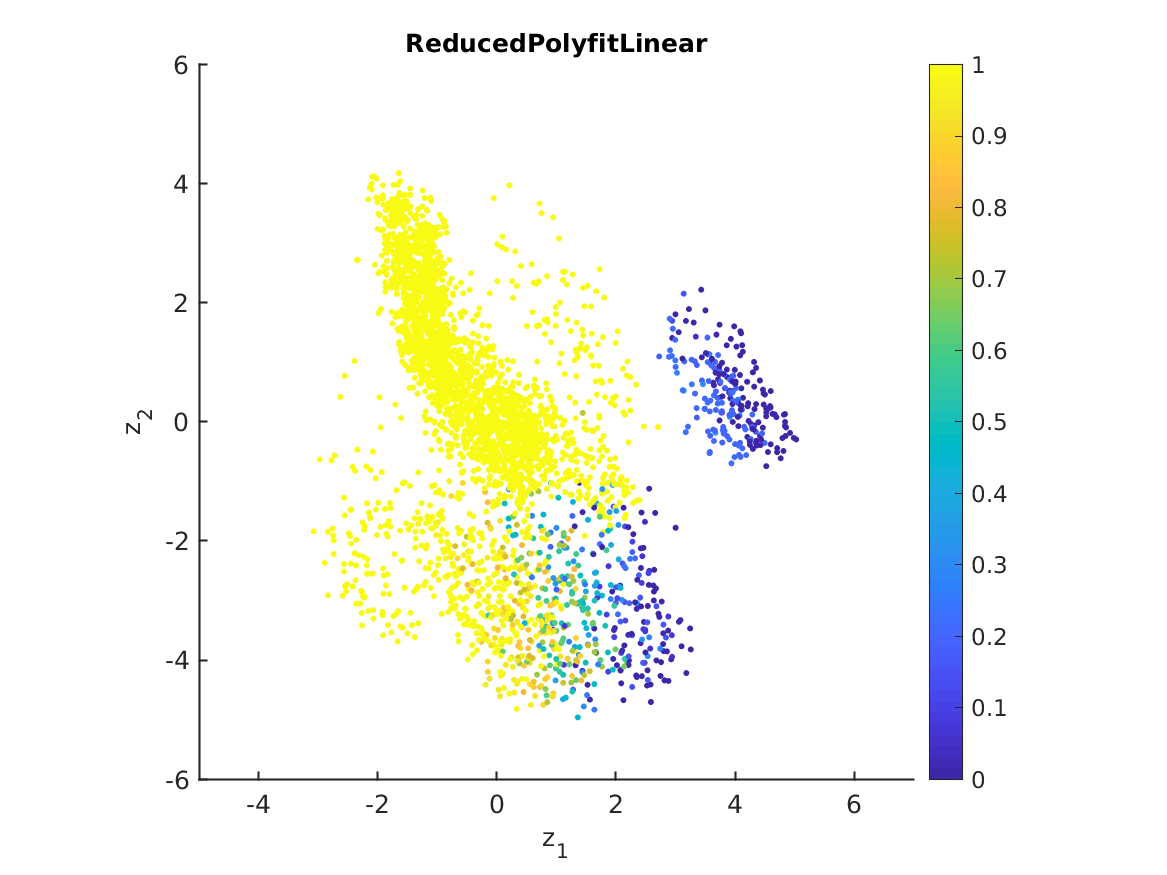}
\caption{}
\label{reducedPolyfitLinearSubfigure}
\end{subfigure}\vspace{10pt}
\caption{The 9 normalized features visualized in a 2D problem instance space.}
\label{9FeaturesFigure}
\end{figure}

	\section{Conclusions and further work}
\label{conclusionsSection}
In this paper, we formulated several computationally challenging problems related to the IMSs of a 0-1 knapsack problem instance. We proved several structural results of IMSs and based on this we formulated the first polynomial and pseudopolynomial time algorithms for solving these computationally challenging problems. We generalized two theorems from earlier work \citep{Jooken:2022} on noisy multi-group exponential 0-1 knapsack problem instances to arbitrary 0-1 knapsack problem instances. These theorems demonstrate a lower bound for the objective function value of an IMS and a cardinality constraint on the group consisting of the least heavy items. A set of 14 features was derived from the results of these computationally challenging problems and from subexpressions that play a crucial role in the theorems. All algorithms were implemented and we calculated these 14 features, together with 44 features from the literature, for two large datasets of problem instances by using a supercomputer for approximately 540 CPU-hours. Several experiments were performed in which these features were used by machine learning models to obtain insights into the empirical hardness of a problem instance by considering the running time of the state-of-the-art knapsack algorithm \textit{\textbf{Combo}} . These experiments show that the proposed features contain important information related to the hardness of a problem instance that was not present in earlier features from the literature and the proposed features are complementary with the features from the literature for one out of two datasets. Furthermore, the experiments indicate that the properties of IMSs are important hardness indicators for a wide variety of 0-1 knapsack problem instances. The instance space analysis methodology was used to visualize the problem instances in a two-dimensional space and this revealed that the locations of hard 0-1 knapsack problem instances are clustered together around a relatively dense region of the instance space. By visualizing the features in the same space, we were able to observe several patterns that were often different in the easy and hard parts of the instance space, which makes these features useful to distinguish easy and hard problem instances.

There are several opportunities for future work that deserve further attention. The algorithms that we propose for calculating the features are all exact algorithms that work in polynomial or pseudopolynomial time, which made it possible for the first time to calculate these features for problem instances with a relatively large number of items. However, the knapsack capacity is often a limiting factor and these algorithms cannot handle very large knapsack capacities well. It is interesting to investigate whether alternative algorithms can be found that scale better for large knapsack capacities. Additionally, it would be interesting to investigate what the impact would be of considering heuristic algorithms that approximate the features to obtain a reduced computational cost. For some features (e.g. the clustering related feature $g^*$) such heuristic algorithms are readily available, whereas for other features (e.g. $|X|$) it is not clear how such an algorithm could be obtained and we think it is likely that more structural insights would be necessary.

It is also interesting to investigate whether the theorems that were developed in this paper can be integrated into existing knapsack solvers to obtain faster algorithms. In particular, the lower bound constraints and cardinality constraints that were proven for IMSs seem to be good candidates. Since the optimal solution of a 0-1 knapsack problem instance is an IMS, such constraints can be added to the integer programming formulation without changing the optimum and this new problem can be relaxed to obtain a dual bound (note that this was also one of the key ingredients of the \textit{\textbf{Combo}} algorithm). It is worthwhile investigating which sort of relaxations would yield the tightest bounds and how these relaxations could be efficiently solved.

\section*{Acknowledgments}

The computational resources and services used in this work were provided by the VSC (Flemish Supercomputer Center), funded by the Research Foundation - Flanders (FWO) and the Flemish Government - department EWI. We gratefully acknowledge the support provided by the ORDinL project (FWO-SBO S007318N, Data Driven Logistics, 1/1/2018 - 31/12/2021). This research received funding from the Flemish Government under the ``Onderzoeksprogramma Artifici\"{e}le Intelligentie (AI) Vlaanderen'' programme.

\bibliography{references}

\newpage
\newpage
\pagebreak
\newpage
\section*{Appendix}
\renewcommand{\thefigure}{A\arabic{figure}}
\begin{figure}[!ht]
\centering\includegraphics[width=0.8\linewidth]{distribution_feature_t1.png}
\caption{The feature $t_1$ visualized in a 2D problem instance space.}
\label{t1Figure}
\end{figure}

\begin{figure}[!ht]
\centering\includegraphics[width=0.8\linewidth]{distribution_feature_t2.png}
\caption{The feature $t_2$ visualized in a 2D problem instance space.}
\end{figure}

\begin{figure}[!ht]
\centering\includegraphics[width=0.8\linewidth]{distribution_feature_t3.png}
\caption{The feature $t_3$ visualized in a 2D problem instance space.}
\end{figure}

\begin{figure}[!ht]
\centering\includegraphics[width=0.8\linewidth]{distribution_feature_f.png}
\caption{The feature $f$ visualized in a 2D problem instance space.}
\end{figure}

\begin{figure}[!ht]
\centering\includegraphics[width=0.8\linewidth]{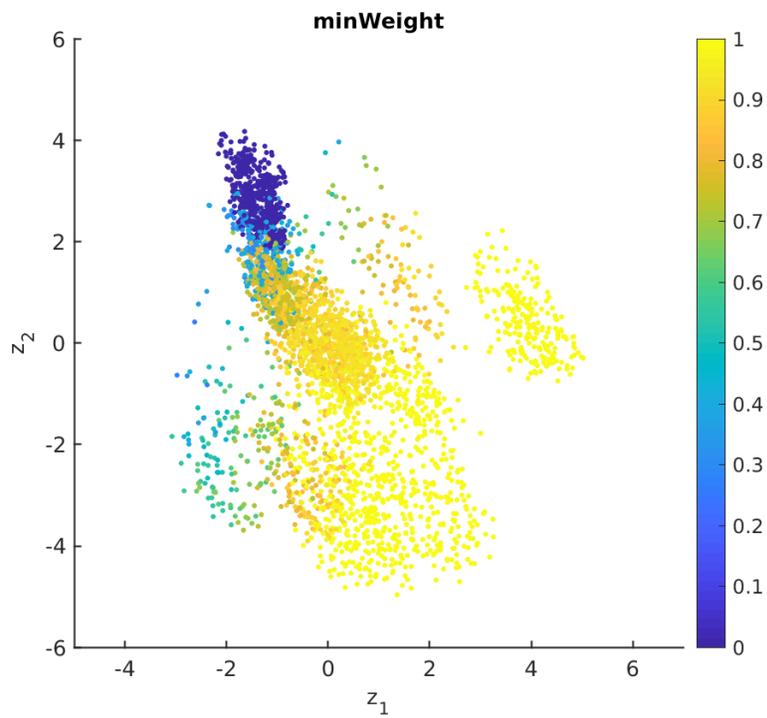}
\caption{The feature Minimum Weight visualized in a 2D problem instance space.}
\end{figure}

\begin{figure}[!ht]
\centering\includegraphics[width=0.8\linewidth]{distribution_feature_FirstWeight.png}
\caption{The feature First Weight visualized in a 2D problem instance space.}
\end{figure}

\begin{figure}[!ht]
\centering\includegraphics[width=0.8\linewidth]{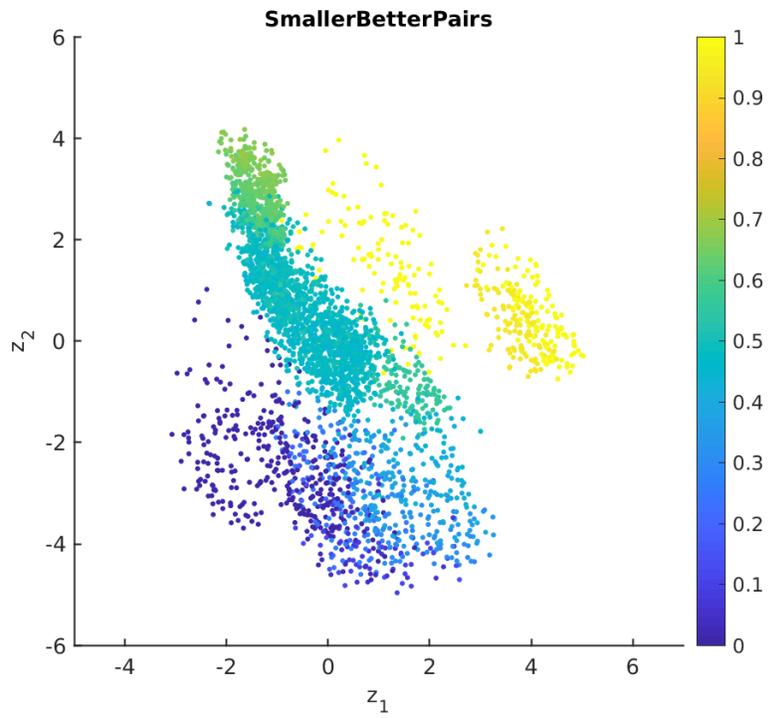}
\caption{The feature Smaller Better Pairs visualized in a 2D problem instance space.}
\end{figure}

\begin{figure}[!ht]
\centering\includegraphics[width=0.8\linewidth]{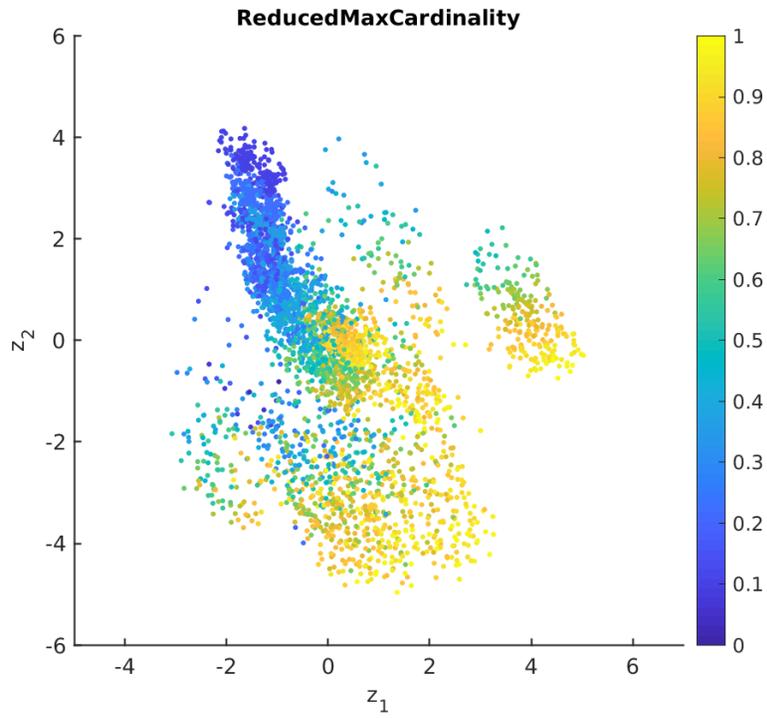}
\caption{The feature Reduced Maximum Cardinality visualized in a 2D problem instance space.}
\end{figure}

\begin{figure}[!ht]
\centering\includegraphics[width=0.8\linewidth]{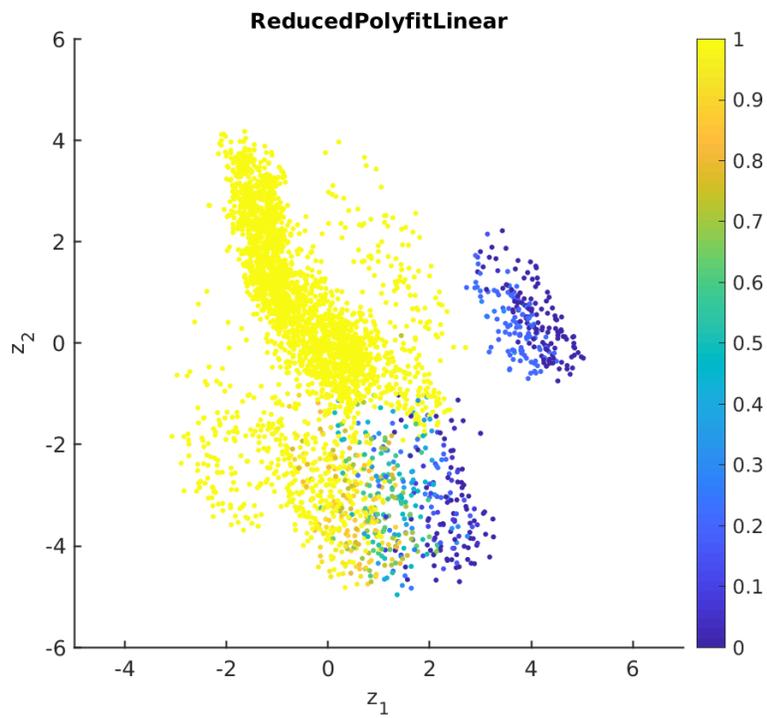}
\caption{The feature Reduced Polyfit Linear visualized in a 2D problem instance space.}
\label{reducedPolyfitLinearFigure}
\end{figure}

\end{document}